\tikzset{->-/.style={decoration={
  markings,
  mark=at position .5 with {\arrow{>}}},postaction={decorate}}}
\tikzset{-->-/.style={decoration={
  markings,
  mark=at position .7 with {\arrow{>}}},postaction={decorate}}}
\newcommand{\rats}{\mbox{\(\mathbb Q\)}}
\newtheorem{definition}{Definition}
\newtheorem{theorem}[definition]{Theorem}
\newtheorem{lemma}[definition]{Lemma}
\newtheorem{proposition}[definition]{Proposition}
\newtheorem{problem}[definition]{Problem}
\newcommand{\dom}[1]{\mathrm{dom}(#1)}
\newcommand{\s}{\mathcal{S}}
\newcommand{\dleq}{\sqsubseteq}
\def\Los{\L{o}\'{s}}
\def\restr #1{{\restriction_{#1}}}
\def\set#1{{ \{#1\}}}
\def\c#1{{\mathcal #1}}
\def\btl{\blacktriangleleft}
\def\nb#1{{$\bullet$\marginpar{$\bullet$ #1}}}
\title{Finite Representability of Semigroups with Demonic Refinement}
\author{Robin Hirsch and Ja\v s \v Semrl}
\date{September 2020}
\begin{document}

\maketitle

\begin{abstract}
Composition and demonic refinement $\sqsubseteq$ of binary relations are defined by
\begin{align*}
(x, y)\in (R;S)&\iff \exists z((x, z)\in R\wedge (z, y)\in S)\\
    R\sqsubseteq S&\iff (dom(S)\subseteq dom(R) \wedge R\restr{dom(S)}\subseteq S)
    \end{align*}
where $dom(S)=\set{x:\exists y (x, y)\in S}$ and  $R\restr{dom(S)}$ denotes the restriction of $R$ to pairs $(x, y)$ where $x\in dom(S)$.
    Demonic calculus  was introduced to model the total correctness of non-deterministic programs and has been applied to program verification \cite{berghammer1986relational,dijkstrapred}.
    
We prove that the class $R(\sqsubseteq, ;)$ of abstract $(\leq, \circ)$ structures isomorphic to a set of binary relations ordered by demonic refinement with composition cannot be axiomatised by any finite set of first-order $(\leq, \circ)$ formulas.  We provide a fairly simple, infinite, recursive axiomatisation that defines $R(\sqsubseteq, ;)$.   We prove that a finite representable $(\leq, \circ)$ structure has a representation over a finite base. This appears to be the first example of a signature for binary relations with composition where the representation class is non-finitely axiomatisable, but where the finite representations for finite representable structures property holds.
  
\end{abstract}

\section{Introduction and Motivation}
The simplest way of representing a $(\leq, \circ)$ structure is to interpret the binary relation $\leq$ as set inclusion $\subseteq$, and the binary function $\circ$ as composition of binary relations $;$.   The class $R(\subseteq, ;)$ of  abstract $(\leq, \circ)$ structures isomorphic to sets of binary relations with inclusion and composition is defined exactly by the axioms of ordered semigroups \cite{zareckii1959representation}, i.e. associativity, partial order, left and right monotonicity.  It is clear that these axioms are valid over $R(\subseteq, ;)$.  Conversely,  given an ordered semigroup $\s=(S, \leq, \circ)$ we may extend the structure to the ordered semigroup  $\s'= (S', \leq, \circ)$ by adding a single new two-sided identity element $e$ where $e\not\leq s$ and $s\not\leq e$ for $s\in S$, and then defining a representation $\theta$ of $\s$  over $\s'$ by
 \[ (x, y)\in s^\theta\iff y\leq x\circ s\]
illustrated in the first diagram of Figure~\ref{fig:1}. The extra identity element is used to prove faithfulness of $\theta$: if $s\not\leq t\in\s$ then $(e, s)\in s^\theta\setminus t^\theta$.    A dual representation $\theta'$ of $\c S$ over $\c S'$,  illustrated in the second part of Figure~\ref{fig:1}, is defined by
 \[ (x, y)\in s^{\theta'}\iff x\leq s\circ y.\]
\begin{figure}
\[
\xymatrix{
\bullet\ar^s[rr]&&\bullet  & &\bullet &\\
&\\
&\bullet\ar^x[luu]\ar_y[ruu]&&\bullet\ar^s[rr]\ar^x[ruu]&&\bullet\ar_y[luu]}
\]
 \caption{\label{fig:1}Two representations, for ordered semigroups}
 \end{figure}
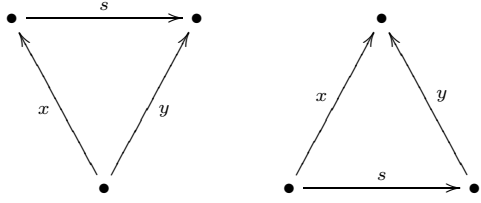
 Both inclusion and composition have demonic variants, written $(\sqsubseteq, *)$ called demonic refinement (defined above in the abstract) and demonic composition defined by
 \[ R*S=R;S\;\cap\;\set{(x, y):\forall z((x, z)\in R\rightarrow z\in \dom{S})}.\] 
 Closely related to the demonic refinement relation is the demonic join operator $\sqcup$, defined by
\[ R\sqcup S = (R\cup S)\restr{\dom{R}\cap \dom{S}}\]
In the set of all binary relations over a set $X$, \/ $R\cup S$ is the meet (least upper bound) of $R, S$ with respect to $\sqsubseteq$.  Conversely, given the operator $\sqcup$ we may recover the relation $\sqsubseteq$ by defining $R\sqsubseteq S\iff R\sqcup S=S$.  Note, however, that an operator $\sqcap$ that returns the greatest lower bound of two relations may not be defined, it is not in general the case that two binary relations have any common lower bound with respect to $\sqsubseteq$.

 It is known that $R(\sqsubseteq, *)$ is also axiomatised by the three axioms of ordered semigroups.  For $R(\subseteq, *)$, although $*$ remains 
 associative,  $\subseteq$ is still a partial order and $*$ is right monotonic with respect to $\subseteq$, we find that left monotonicity fails.  Recently  it was shown that $R(\subseteq, *)$ is not finitely axiomatisable \cite{HMS:20}.
 
 The case we focus on here is $R(\sqsubseteq, ;)$, with ordinary composition and demonic refinement.  This time, we find that both left and right monotonicity fail.  How do we axiomatise $R(\sqsubseteq, ;)$?  Of course we may use the axioms of associativity and partial order, but what additional axioms should be included in order to fill the gap created by the omission of the two monotonicity axioms, and define the representation class?  The main results here are  a recursively defined  infinite set of axioms that define $R(\sqsubseteq, ;)$ and a proof that no finite set of axioms can do it.    Our proof that the recursive axiomatisation is complete also shows that a finite representable $(\sqsubseteq, ;)$-structure has a representation over a finite base set.

Algebras of binary relations have been used extensively to model program semantics \cite{mili1987relational, dijkstrapred}, and the introduction of demonic choice ($\sqcup$) and demonic composition ($*$) has extended this framework towards reasoning about the total correctness of non-deterministic Turing Machines \cite{dijkstrapred, berghammer1986relational}.

The introduction of the demonic refinement predicate led to further verification applications, for example utilising Refinement Algebras \cite{von2004towards, de2008structure}. Furthermore, relaxing the requirement that composition is a total binary operator we obtain  \emph{refined semigroupoids}, which have been of interest in relation-algebraic programming \cite{kahl2008relational}.

The fairly extensive literature on demonic relations and operators includes a variety of different notations.  In the context of Kleene Algebra extensions, such as Refinement Algebra, where the emphasis is on the behaviour of tests, \/ $\sqcap, \sqsupseteq$ are sometimes used in place of $\sqcup, \sqsubseteq$.

\section{Axiomatising $R(\sqsubseteq, ;)$}
We focus on the signature $(\sqsubseteq, ;)$, in the abstract case the corresponding symbols will be $(\leq, \circ)$. 
A binary relation over the base $X$ is a subset of $X\times X$.   A \emph{concrete} $(\sqsubseteq, ;)$ structure is a set of binary relations over some base, closed under  composition, with demonic refinement.  An isomorphism from an abstract $(\leq, \circ)$ structure to a concrete $(\sqsubseteq, ;)$ structure is called a representation.  $R(\sqsubseteq, ;)$ denotes the class of all $(\leq, \circ)$ structures isomorphic to concrete $(\sqsubseteq, ;)$ structures.

Given a $(\leq, \circ)$ structure $\s$ we let $\s'$ be the structure obtained from $\s$ by adjoining a single  new identity element $e$ where $e\circ x=x=x\circ e,\; e\leq e=e\circ e$ but $x\not\leq e$ and $e\not\leq x$ for $x\in\s$.

The signature does not include the domain operation, nor does it include `angelic' (ordinary) set inclusion. However, we will define with infinitary $(\leq, \circ)$-formulas, the predicates $\blacktriangleleft, \triangleleft^s$ to signify the domain inclusion and inclusion of the restriction to the domain of $s$ respectively.  

Let

$$a \blacktriangleleft b \Leftrightarrow \bigvee_{n < \omega} a \blacktriangleleft_n b$$
$$a \triangleleft^s b \Leftrightarrow \bigvee_{n < \omega} a \triangleleft^s_n b$$
where
\begin{align*}
    a \blacktriangleleft_0 b &\Leftrightarrow 
    a\geq b\vee \exists c (a \geq b\circ c)\\
    a\triangleleft_0^s b&\Leftrightarrow (a\leq b\wedge s= b)\\
   a \blacktriangleleft_{n+1} b &\Leftrightarrow \left\{ \begin{array}{l}
    (a \triangleleft^a_n b)\;\vee\\
     \exists c\;( a \blacktriangleleft_n c\wedge c \blacktriangleleft_n b)\;\vee\\
\exists d, f,f'\;(a=d\circ f\wedge f\blacktriangleleft_nf'\wedge b=d\circ f'))
   \end{array}\right\}\\
  a \triangleleft^s_{n+1} b &\Leftrightarrow  \left\{\begin{array}{l}
    (\exists c\; (a \triangleleft^s_n c\wedge c \triangleleft^s_n b)) \;\vee \\
 \exists c,c',d,d'\;(a=c\circ d\wedge c\triangleleft_n^s c'\wedge d\triangleleft^d_n d'\wedge b=c'\circ d')\;\vee\\
 \exists s'(a\triangleleft^{s'}_n b\wedge s\btl_n s')
\end{array}\right\}
\end{align*}

\begin{lemma}\label{lem:trans}    $\;$
\begin{enumerate}
\item   \label{en:trans}  
 $a\btl_n b\wedge b\btl_nc\rightarrow a\btl_{n+1}c,\;a\triangleleft_n^s b\wedge b\triangleleft_n^s c\rightarrow a\triangleleft^s_{n+1}c$, so  $\blacktriangleleft$ and $\triangleleft^s$ are transitive, for each $s\in\sc S$

\item\label{en:tx} $a\triangleleft_n^sb\circ c,\; b\triangleleft_n^s b',\; c\triangleleft_n^cc'$ implies $a\triangleleft_{n+1}^sb'\circ c'$,
\item\label{en:bx}$d\btl_n a\circ c,\; a\leq a',\; d\btl_n a',\; c\btl_n c'$ implies $d\btl_{n+3} a'\circ c'$,
\item \label{en:ss'} $s\blacktriangleleft_n s',\; a\triangleleft_n^{s'}b$ implies  $a\triangleleft_{n+1}^s b$
\end{enumerate}
\end{lemma}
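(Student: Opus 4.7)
The plan is to prove each clause by unrolling the inductive definitions of $\btl_n$ and $\triangleleft^s_n$ at level $n{+}1$ (or $n{+}3$ for clause~\ref{en:bx}), matching the conclusion to a suitable disjunct and supplying witnesses from the hypotheses. A preliminary observation used throughout is reflexivity at level zero---$a\btl_0 a$ from $a\geq a$, and $a\triangleleft^a_0 a$ from $a\leq a$---which, combined with the transitivity disjuncts, yields monotonicity in $n$ by induction.

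Clauses~\ref{en:trans} and~\ref{en:ss'} are essentially definitional. For clause~\ref{en:trans}, the second disjunct of $\btl_{n+1}$ is exactly $\exists c(a\btl_n c\wedge c\btl_n b)$, and the first disjunct of $\triangleleft^s_{n+1}$ is its analogue for $\triangleleft^s$; taking the middle term from the hypothesis as the existential witness closes each argument. Clause~\ref{en:ss'} is literally the third disjunct of $\triangleleft^s_{n+1}$, with $s'$ supplied by the hypothesis.

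For clause~\ref{en:tx}, I would apply the second disjunct of $\triangleleft^s_{n+1}$ to the hypotheses $b\triangleleft^s_n b'$ and $c\triangleleft^c_n c'$, with the trivial decompositions $b\circ c=b\circ c$ and $b'\circ c'=b'\circ c'$, to obtain $(b\circ c)\triangleleft^s_{n+1}(b'\circ c')$. Chained with the hypothesis $a\triangleleft^s_n(b\circ c)$ via the transitivity disjunct of $\triangleleft^s$, after a monotonic lift of indices, this gives the conclusion.

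Clause~\ref{en:bx} is the most delicate. The direct chain $d\btl(a\circ c)\btl(a\circ c')\btl(a'\circ c')$ handles its first two links---the first by the hypothesis and the second by the third disjunct of $\btl_{n+1}$ applied to $c\btl_n c'$ with common left factor $a$---but fails at the third link, since $a\leq a'$ alone does not yield left-monotonicity of composition. The remedy is to detour through the $\triangleleft^s$ family: $a\leq a'$ gives $a\triangleleft^{a'}_0 a'$, and the second disjunct of $\triangleleft^{a'}_{n+1}$ combined with the reflexive instance $c'\triangleleft^{c'}_n c'$ produces $(a\circ c')\triangleleft^{a'}_{n+1}(a'\circ c')$; clause~\ref{en:ss'} together with the hypothesis $d\btl_n a'$ then shifts the superscript to $d$, and the first disjunct of $\btl_{n+1}$ converts back to $\btl$. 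Chaining with the earlier links via transitivity reaches $d\btl_{n+3}(a'\circ c')$. The main obstacle will be the bookkeeping in this final step: the first disjunct of $\btl_{n+1}$ requires a $\triangleleft^d$ statement whose first argument is $d$, while the chain produces $(a\circ c')\triangleleft^d(a'\circ c')$, so an auxiliary step tying $d$ into the left slot using the already-established $\btl$-chain must be inserted, and the indices must be tracked so that they sum to exactly $n+3$.
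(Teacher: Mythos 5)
Your treatment of clauses (1) and (4) is correct and matches the paper, which dismisses them as immediate from the definitions. For clause (2), watch the index: since the hypothesis is $a\triangleleft^s_n b\circ c$ rather than a syntactic identity $a=b\circ c$, your route (build $(b\circ c)\triangleleft^s_{n+1}(b'\circ c')$ by the composition disjunct, then chain through $b\circ c$) lands at level $n+2$, not the claimed $n+1$; the paper's ``directly from the definitions'' gives $n+1$ only when $a$ is literally the composite, which is in fact how the clause is invoked later in the paper.

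The genuine gap is the final step of clause (3), and it is not mere bookkeeping. Your derivation of $(a\circ c')\triangleleft^{a'}_{n+1}(a'\circ c')$ and then, via $d\btl_n a'$ and clause (4), of $(a\circ c')\triangleleft^{d}_{n+2}(a'\circ c')$ is sound, but the conversion back to $\btl$ cannot be completed as you describe. The only bridge from $\triangleleft$ to $\btl$ in the definition is the first disjunct $g\triangleleft^{g}_m h\Rightarrow g\btl_{m+1}h$, whose superscript must equal its \emph{left argument}; there is no rule combining $d\btl (a\circ c')$ with $(a\circ c')\triangleleft^{d}(a'\circ c')$ to yield $d\btl(a'\circ c')$. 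The ``auxiliary step tying $d$ into the left slot'' would have to be something like $d\triangleleft^{d}_m(a\circ c')$, which semantically asserts $d\subseteq a\circ c'$ and is not derivable from the hypotheses. The paper instead lands the superscript on $a\circ c'$ rather than on $d$: from $a\triangleleft^{a'}_0 a'$ and the claim $a\circ c'\btl_0 a'$ it derives $a\triangleleft^{a\circ c'}_1 a'$, composes with $c'\triangleleft^{c'}_0 c'$ to get $a\circ c'\triangleleft^{a\circ c'}_2 a'\circ c'$, converts this to $a\circ c'\btl_3 a'\circ c'$ by the first disjunct, and finishes with the chain $d\btl_n a\circ c\btl_{n+1}a\circ c'\btl_3 a'\circ c'$. (Note that the pivotal claim $a\circ c'\btl_0 a'$ is exactly the kind of left-monotonicity step you rightly flag as unavailable from $a\leq a'$ alone --- the definition of $\btl_0$ gives $a\circ c'\btl_0 a$, not $a\circ c'\btl_0 a'$ --- so the paper's middle link deserves the same scrutiny; but it is the paper's superscript choice, not yours, that makes the final $\triangleleft$-to-$\btl$ conversion formally applicable.)
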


\begin{proof}
\eqref{en:trans}, \eqref{en:tx}, \eqref{en:ss'} follow directly from the definitions of $\btl, \triangleleft$. For \eqref{en:bx}, observe how from $a \leq a'$ we have $a \triangleleft^{a'}_0 a'$ which, together with $a \circ c' \btl_0 a'$, give us $a \triangleleft^{a \circ c'}_1 a'$. From this and $c' \triangleleft^{c'}_0 c'$ we get $a \circ c' \triangleleft^{a \circ c'}_2 a' \circ c'$ and thus $a \circ c' \btl_3 a' \circ c'$. We also have $c \btl_n c'$ and hence $a \circ c \btl_{n+1} a \circ c'$. So, by the transitive steps $d \btl_n a \circ c \btl_{n+1} a \circ c' \btl_3 a' \circ c'$ we obtain $d\btl_{n+3} a'\circ c'$.
\end{proof}

\begin{lemma}
\label{lem:PredicatesDefinedWell}
Let  $\mathcal{S} \in R(\dleq, ;)$ and let $\theta$ be a representation of $\c S$. For all $a,b,s \in \mathcal{S}$ 
\begin{align*}
    a \blacktriangleleft b &\Rightarrow \dom{a^\theta} \subseteq \dom{b^\theta}\mbox{, and}\\
    a \triangleleft^s b &\Rightarrow a^\theta\restr{\dom{s^\theta}} \subseteq b^\theta.
\end{align*}
\end{lemma}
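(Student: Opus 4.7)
The plan is a simultaneous induction on $n<\omega$ proving

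\begin{align*}
a\blacktriangleleft_n b &\Rightarrow \dom{a^\theta}\subseteq \dom{b^\theta},\\
a\triangleleft^s_n b &\Rightarrow a^\theta\restr{\dom{s^\theta}}\subseteq b^\theta,
\end{align*}
which, taking the union over all $n$, immediately yields the lemma.

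For the base case ($n=0$), I unpack the definitions directly using the semantics of $\sqsubseteq$ in $\theta$. If $a\blacktriangleleft_0 b$ then either $b\leq a$, in which case $b^\theta\sqsubseteq a^\theta$ gives $\dom{a^\theta}\subseteq \dom{b^\theta}$, or $b\circ c\leq a$ for some $c$, in which case $\dom{a^\theta}\subseteq\dom{(b;c)^\theta}\subseteq\dom{b^\theta}$. If $a\triangleleft^s_0 b$ then $a\leq b$ and $s=b$, so $a^\theta\sqsubseteq b^\theta$ gives $a^\theta\restr{\dom{b^\theta}}\subseteq b^\theta$, and substituting $s=b$ finishes it.

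For the inductive step I treat each clause in the definitions of $\blacktriangleleft_{n+1}$ and $\triangleleft^s_{n+1}$ in turn. The transitivity clauses fall out of transitivity of $\subseteq$; the clause $a\triangleleft^a_n b\Rightarrow a\blacktriangleleft_{n+1}b$ uses the induction hypothesis with $s=a$, noting $a^\theta=a^\theta\restr{\dom{a^\theta}}$. The substantive clauses are the compositional ones. For $a=d\circ f$, $b=d\circ f'$ with $f\blacktriangleleft_n f'$, I take $(x,y)\in a^\theta$, factor through some $z\in\dom{f^\theta}\subseteq \dom{(f')^\theta}$ by the IH, extend $z$ to some $y'$ with $(z,y')\in(f')^\theta$, and conclude $(x,y')\in b^\theta$. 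For $a=c\circ d$, $b=c'\circ d'$ with $c\triangleleft^s_n c'$ and $d\triangleleft^d_n d'$, I take $(x,y)\in a^\theta\restr{\dom{s^\theta}}$, factor $(x,z)\in c^\theta$ and $(z,y)\in d^\theta$; the first pair lies in $c^\theta\restr{\dom{s^\theta}}\subseteq(c')^\theta$, the second in $d^\theta\restr{\dom{d^\theta}}=d^\theta\subseteq(d')^\theta$, giving $(x,y)\in b^\theta$. The clause involving a change of $s$-label is immediate: $\dom{s^\theta}\subseteq\dom{(s')^\theta}$ by the IH applied to $s\blacktriangleleft_n s'$, so restricting to the smaller domain $\dom{s^\theta}$ and then applying the IH for $a\triangleleft^{s'}_n b$ yields the conclusion.

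The only step that requires genuine care is the first compositional clause, since one must extend a partial computation of $a^\theta$ into a computation of $b^\theta$ using only the domain-inclusion hypothesis on $f,f'$; everything else is bookkeeping. No additional machinery is needed beyond the semantics of $\sqsubseteq$ and the induction hypothesis.
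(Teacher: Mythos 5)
Your proposal is correct and follows essentially the same route as the paper: a simultaneous induction on $n$, unpacking the semantics of $\sqsubseteq$ for the base cases and handling each clause of the recursive definitions exactly as the paper does, including the factor-through-$z$ argument for the two compositional clauses. No gaps.
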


\begin{proof}
We prove the claim by induction over $n$. In the base case,  if $a\blacktriangleleft_0 b$ then either $a^\theta\sqsupseteq b^\theta$ or $a^\theta \sqsupseteq b^\theta;c^\theta$ (for some $c$)  hence $\dom{a^\theta} \subseteq \dom{b^\theta}$.  And if $a\triangleleft^s_0 b$ then  $s=b,\; a\leq b$, so $a^\theta\restr{\dom{s^\theta}}= a^\theta\restr{\dom{b^\theta}}\subseteq b^\theta$.

For the inductive step, suppose $a\blacktriangleleft_{n+1} b$, from the recursive definition, there are three alternatives. In  the first case, $a\triangleleft_n^a b$ then inductively $a^\theta=a^\theta\restr{\dom{a^\theta}}\subseteq b^\theta$ so $\dom{a^\theta}\subseteq \dom{b^\theta}$.  In the second case,  inductively $\dom{a^\theta}\subseteq \dom{c^\theta}\subseteq \dom{b^\theta}$. In the third case,  there are $d,  f,f'$ where $a=d\circ f,\; f\blacktriangleleft_n f'$ and $b=d\circ f'$.  For any  $x\in \dom{a^\theta}$, there is $y$ such that $(x, y)\in a^\theta$ and there is $z$ such that $(x, z)\in d^\theta,\; (z, y)\in f^\theta$.    Inductively, $z\in \dom{f^\theta}\subseteq \dom{(f')^\theta}$ so there is $w$ such that $(z, w)\in (f')^\theta$, hence $(x, w)\in d^\theta;(f')^\theta=b^\theta$, so $x\in \dom{b^\theta}$, proving $\dom{a^\theta}\subseteq \dom{b^\theta}$.    

 Now suppose $a\triangleleft^s_{n+1}b$.  There are three  alternatives in the recursive definition. 
In the first case, inductively  $a^\theta\restr{\dom{s^\theta}}\subseteq c^\theta$ and $c^\theta\restr{\dom{s^\theta}}   \subseteq b^\theta$, so $a^\theta\restr{\dom{s^\theta}}\subseteq b^\theta$.  In the second case, there are $c, c', d, d'$ as in the definition.  If $x\in \dom{s^\theta}$ and $(x, y)\in a^\theta$ then there is $z$ such that $(x, z)\in c^\theta,\;(z, y)\in d^\theta$.  Inductively, $(x, z)\in (c')^\theta$ and $(z, y)\in (d')^\theta$, hence $(x, y)\in (c'\circ d')^\theta=b^\theta$.   In the third case, $\dom{s^\theta}\subseteq \dom{(s')^\theta}$, so  $a\restr{\dom{s^\theta}}\subseteq a\restr{\dom{(s')^\theta}}\subseteq b^\theta$. This proves $a^\theta\restr{\dom{s^\theta}}\subseteq b^\theta$, as required.
\end{proof}

Let
\begin{align*}
\sigma_n &= ((b\blacktriangleleft_n a\wedge a\triangleleft^b_n b)\rightarrow a\leq b)\\
\sigma & =((b\blacktriangleleft a \wedge a \triangleleft^b b)\rightarrow a\leq b)
\end{align*}
For finite $n$, $\sigma_n$ is a first-order formula, while $\sigma$ is infinitary and is equivalent to $\bigwedge_{n<\omega} \sigma_n$.

\begin{lemma}\label{lem:sound}
\[ R(\sqsubseteq, ;)\models \sigma.\]
\end{lemma}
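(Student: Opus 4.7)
The plan is to unpack both hypotheses through the previous lemma and observe that they translate precisely into the two conditions defining demonic refinement on the representation side; then faithfulness of the representation hands us $a \leq b$ in the abstract structure.

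In more detail, fix $\mathcal{S} \in R(\sqsubseteq, ;)$ together with a representation $\theta$, and suppose $a, b \in \mathcal{S}$ satisfy $b \blacktriangleleft a$ and $a \triangleleft^b b$. I would first apply Lemma~\ref{lem:PredicatesDefinedWell} to $b \blacktriangleleft a$ to obtain $\dom{b^\theta} \subseteq \dom{a^\theta}$. Then I would apply the same lemma to $a \triangleleft^b b$ to obtain $a^\theta \restr{\dom{b^\theta}} \subseteq b^\theta$. These two inclusions are exactly the definition of $a^\theta \sqsubseteq b^\theta$, so $a^\theta \sqsubseteq b^\theta$ holds in the concrete structure.

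Finally, since $\theta$ is an isomorphism between $\mathcal{S}$ and a concrete $(\sqsubseteq, ;)$ structure, the abstract relation $\leq$ is carried exactly to $\sqsubseteq$ on the representing relations; hence $a^\theta \sqsubseteq b^\theta$ forces $a \leq b$ in $\mathcal{S}$. As $\mathcal{S}$ was an arbitrary member of $R(\sqsubseteq, ;)$, the implication $\sigma$ is valid throughout the class.

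There is no real obstacle: the lemma has already done all the inductive work unpacking $\blacktriangleleft$ and $\triangleleft^s$ into their intended semantic meaning, so $\sigma$ is essentially a restatement of the definition of $\sqsubseteq$ in disguise. The only thing to be careful about is to note that $\sigma$ is infinitary (a conjunction over all $n < \omega$ of $\sigma_n$), so soundness must be established using the infinitary predicates $\blacktriangleleft, \triangleleft^s$ rather than any fixed level $n$; but this is automatic because Lemma~\ref{lem:PredicatesDefinedWell} is already proved by induction on $n$ and hence applies uniformly.
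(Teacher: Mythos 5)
Your proof is correct and follows exactly the paper's own argument: apply Lemma~\ref{lem:PredicatesDefinedWell} to both premises of $\sigma$, observe that the resulting inclusions are precisely the definition of $a^\theta \sqsubseteq b^\theta$, and conclude $a \leq b$ by faithfulness of the representation. Your closing remark about the infinitary nature of $\sigma$ being handled uniformly by the inductive lemma is a fair observation, though the paper leaves it implicit.
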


\begin{proof}
Let $\c S\in R(\sqsubseteq, ;)$ and let $\theta$ be a representation.    
Assume the premise of $\sigma$, \/
$\c S\models (b\blacktriangleleft a\wedge a\triangleleft^b b)$.
By the previous Lemma, 
$\dom{b^\theta}\subseteq \dom{a^\theta}$ and 
$a^\theta\restr{\dom{b^\theta}}\subseteq b^\theta$,   i.e.  $a^\theta\sqsubseteq b^\theta$.   Since $\theta$ represents $\leq$ as $\sqsubseteq$ we must have $\s\models a\leq b$.
Thus $\c S\models\sigma$.
\end{proof}

The following definition is used to prove completeness of our axioms.
\begin{definition}\label{def:theta}  Let $\c S$ be a $(\leq, \circ)$-structure.  Consider the base set 
\[ X= X_i\stackrel{\bullet}{\cup} X_f \stackrel{\bullet}{\cup} X_\beta\]
where
\begin{align*}
    X_i&=\set{(d_i, s_i):d, s\in\s,\; d\blacktriangleleft s}\\
    X_f&=\set{s_f:s\in\s}\\
    X_\beta&=\set{d_\beta:d\in\s},
\end{align*}
we refer to the points in $X_i, X_f, X_\beta$ as \emph{initial points}, \emph{following points} and \emph{branch points}, respectively, it may help to visualise these points using Figure~\ref{fig:points}.

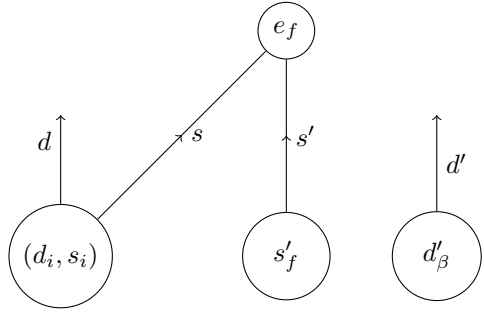
\begin{figure}
    \centering
    \begin{tikzpicture}

            \node[draw, circle] (oi) at (0,0) {$(d_i, s_i)$};    
            \node (ti) at (0,2) {};
            \node[draw,circle] (e) at (3,3) {$e_f$};
     \path (oi) edge[->] node[above left]{$d$} (ti) ;
     
          \path (oi) edge[->-] node[ right]{$s$} (e);
          
          \node[draw,circle] (of) at (3,0) {$\;\;s'_f\;\;$};
          \node[draw,circle] (ob) at (5,0){$\;\;d'_\beta\;\;$};
          \path (of) edge[->-] node[ right]{$s'$} (e);
          \node (tb) at (5, 2){};
          \path (ob) edge[->] node[right]{$d'$}(tb);
     
    \end{tikzpicture}
    \caption{Points  $(d_i, s_i)\in X_i,\;\ s'_f\in  X_f,\; d'_\beta\in X_\beta$}
    \label{fig:points}
\end{figure}

For each $x\in X$ let $\delta(x)\in\s$ be defined by $\delta(x)=d$ if and only if $x\in\set{{(d_i, s_i), d_f, d_\beta:d\blacktriangleleft s\in\s}}$. It may help the reader to think of $\delta(x)$ as an element with $\btl$-minimal domain in the representation we construct, illustrated as a vertical outgoing arrow in Figure~\ref{fig:points}.
For $x\in  X_i\stackrel{\bullet}{\cup} X_f$, define $\lambda(x)\in\s$ by letting $\lambda(x)=s$ iff $x\in\set{(d_i, s_i), s_f: d\blacktriangleleft s}$ 
(undefined if $x\in X_\beta$), illustrated as the label of the edge $(x, e_f)$ in Figure~\ref{fig:points}.   $\lambda(x)$ will be used as the $\triangleleft^{\delta(x)}$-minimal label of the edge $(x, e_f)$ when $x\in X_i\cup X_f$, there are no labels on $(x, e_f)$ when $x\in X_\beta$.   Observe $\delta(x)=\lambda(x)$ for $x\in X_f$.

For each $a\in\s$ define a binary relation $a^\theta\subseteq X\times X$ by letting $(x, y)\in a^\theta$ if and only if 
\begin{enumerate}\renewcommand{\theenumi}{\Roman{enumi}}
    \item $y\not\in X_i$,\label{en:i}
    \item $x\in X_\beta \Rightarrow y\in X_\beta$, \label{en:b}
     \item $\delta(x)\blacktriangleleft a\circ\delta(y)$ and\label{en:delta}
      \item $x\in X_i\cup X_f,\; y\in X_f\Rightarrow \lambda(x)\triangleleft^{\delta(x)}a\circ\lambda(y)$.\label{en:f}

\end{enumerate}

\end{definition}
The $\lambda$ part of this definition is loosely based on the dual representation $\theta'$ of ordered semigroups (see the second part of Figure~\ref{fig:1}) and is visualised in Figure~\ref{fig:theta}.

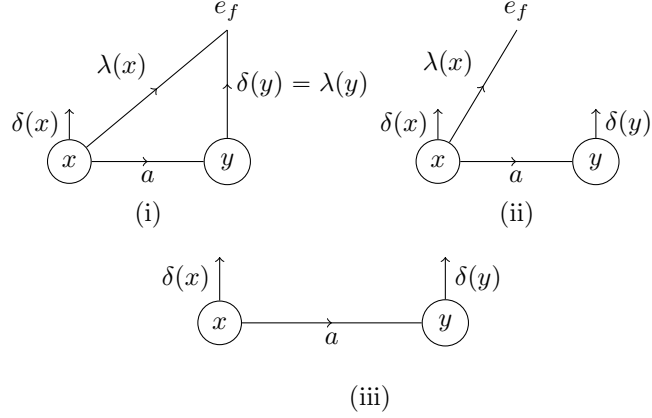
\begin{figure}
    \centering
    \begin{tikzpicture}[scale=.7]

     \node at (1.5,-1){(i)};
     \node at (8.5,-1){(ii)};
        \coordinate (o1) at (3,2.5);
        \node [above] at (o1) {${e_f}$};
        \node[draw, circle] (x1) at (0,0) {$x$};
        \node[draw, circle] (y1) at (3,0) {$y$};
        \coordinate (x1') at (0,1);
        \coordinate (y1') at (3,1);
        
        \path (x1) edge[->-] node[below]{$a$} (y1);
        \path (x1) edge[->] node[left]{$\delta(x)$} (x1');
        \path (x1) edge[->-] node[above left]{$\lambda(x)$} (o1);
        \path (y1) edge[->-] node[right]{$\delta(y) = \lambda(y)$} (o1);
        
        \coordinate (o2) at (8.5,2.5);
        \node[above]  at (o2) {${e_f}$};
        \node[draw, circle] (x2) at (7,0) {$x$};
        \node[draw, circle] (y2) at (10,0) {$y$};
        \coordinate (x2') at (7,1);
        \coordinate (y2') at (10,1);
        
        \path (x2) edge[->-] node[below]{$a$} (y2);
        \path (x2) edge[->] node[left]{$\delta(x)$} (x2');
        \path (x2) edge[->-] node[above left]{$\lambda(x)$} (o2);
        \path (y2) edge[->] node[right]{$\delta(y)$} (y2');

    \end{tikzpicture}
    \begin{tikzpicture}
     \node at (16,-1){(iii)};
     \node[draw,circle] (x3) at (14,0) {$x$};
    \node[draw,circle] (y3) at (17,0){$y$}; 
    \node (x3') at (14,1){};
    \node(y3') at (17,1){};
    \path(x3) edge[->-] node[below] {$a$} (y3);
    \path (x3) edge[->] node[left]{$\delta(x)$}(x3');
     \path(y3) edge[->] node[right] {$\delta(y)$} (y3');
    \end{tikzpicture}
    \caption{$(x,y)\in \theta(a)$ with  (i) $x\in X_i\cup X_f$ and $y \in X_f$,  (ii) $x\in X_i\cup X_f$ and $y \in X_\beta$, and (iii) $x\in X_\beta,\; y\in X_\beta$.  In each case $\delta(x)\btl a\circ\delta(y)$, in case (i) only $\lambda(x)\triangleleft^{\delta(x)} a\circ \lambda(y)$.  In (i) and (ii) if $x\in X_f$  then the $\delta(x)$ and $\lambda(x)$ arrows coincide.}
    \label{fig:theta}
\end{figure}

\begin{lemma}
\label{lem:ref}
Let $\s=(S, \leq, \circ)$ be a structure where $\circ$ is associative, $\leq$ is a partial order, $\s\models\sigma$ and suppose there is an identity $e\in\s$.   Let $\theta$ be from Definition~\ref{def:theta}.  Then 
 $a  \leq b \in \mathcal{S}$ if and only if   $a^\theta\sqsubseteq b^\theta$.
\end{lemma}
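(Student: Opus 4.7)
My plan is to prove the two directions separately, with the axiom $\sigma$ doing the heavy lifting in the converse. Before turning to the theorem proper, I would record a \emph{weakening} fact: for all $a, c, d \in \s$,\/ $d \btl a \circ c$ implies $d \btl a$. This holds because $a \circ c \btl_0 a$ (the second disjunct of the base case is satisfied with $c$ itself as witness), so Lemma~\ref{lem:trans}(\ref{en:trans}), together with the easy monotonicity $a \btl_n b \Rightarrow a \btl_{n+1} b$ of the hierarchy in $n$, yields $d \btl a$.

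For $(\Rightarrow)$, assume $a \leq b$. Then $b \btl_0 a$ (via $b \geq a$) and $a \triangleleft^b_0 b$. To show $\dom{b^\theta} \subseteq \dom{a^\theta}$, suppose $x \in \dom{b^\theta}$ with witness $(x, y) \in b^\theta$; condition (\ref{en:delta}) gives $\delta(x) \btl b \circ \delta(y)$, weakening gives $\delta(x) \btl b$, and transitivity with $b \btl a$ gives $\delta(x) \btl a$, so the pair $(x, e_\beta)$ satisfies all four defining conditions of $a^\theta$ ((\ref{en:i}) and (\ref{en:b}) are trivial, (\ref{en:delta}) reduces to $\delta(x) \btl a$, and (\ref{en:f}) is vacuous because $e_\beta \in X_\beta$). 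To show $a^\theta\restr{\dom{b^\theta}} \subseteq b^\theta$, take $(x, y) \in a^\theta$ with $x \in \dom{b^\theta}$, so $\delta(x) \btl b$ by the argument just given; conditions (\ref{en:i}) and (\ref{en:b}) are preserved; condition (\ref{en:delta}) upgrades from $\delta(x) \btl a \circ \delta(y)$ to $\delta(x) \btl b \circ \delta(y)$ by Lemma~\ref{lem:trans}(\ref{en:bx}); and when applicable, condition (\ref{en:f}) upgrades from $\lambda(x) \triangleleft^{\delta(x)} a \circ \lambda(y)$ to $\lambda(x) \triangleleft^{\delta(x)} b \circ \lambda(y)$ by first using Lemma~\ref{lem:trans}(\ref{en:ss'}) with $\delta(x) \btl b$ and $a \triangleleft^b b$ to deduce $a \triangleleft^{\delta(x)} b$, and then Lemma~\ref{lem:trans}(\ref{en:tx}).

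For $(\Leftarrow)$, I argue by contraposition. If $a \not\leq b$, then $\sigma$ tells us either $b \not\btl a$ or $a \not\triangleleft^b b$. In the first case, take $x = b_\beta$: the pair $(b_\beta, e_\beta) \in b^\theta$ (condition (\ref{en:delta}) reduces to $b \btl b$) witnesses $x \in \dom{b^\theta}$, while any hypothetical $(b_\beta, y') \in a^\theta$ would force $y' \in X_\beta$ and $b \btl a \circ \delta(y')$, whence weakening yields the contradictory $b \btl a$. In the second case we may assume $b \btl a$ (otherwise the first case applies), so $(b_i, a_i) \in X_i$ is defined; take $x = (b_i, a_i)$ and $y = e_f$. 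Then $(x, y) \in a^\theta$ (condition (\ref{en:delta}) is $b \btl a$, and condition (\ref{en:f}) reduces to $a \triangleleft^b a$, which follows from $a \triangleleft^a_0 a$ and $b \btl a$ via Lemma~\ref{lem:trans}(\ref{en:ss'})), and $(x, e_\beta) \in b^\theta$ gives $x \in \dom{b^\theta}$; but $(x, y) \notin b^\theta$, since condition (\ref{en:f}) for $b^\theta$ would precisely require $a \triangleleft^b b$.

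The main obstacle is the failure of monotonicity of $\circ$ with respect to $\leq$: one cannot pass directly from $\delta(x) \btl a \circ \delta(y)$ to $\delta(x) \btl b \circ \delta(y)$ when $a \leq b$, so every step must be routed through the auxiliary predicates $\btl$ and $\triangleleft^s$ and the transitivity clauses of Lemma~\ref{lem:trans}. The weakening fact is the linchpin, used both for the domain inclusion in $(\Rightarrow)$ and for the contradiction in the $b \not\btl a$ branch of $(\Leftarrow)$.
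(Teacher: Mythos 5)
Your proposal is correct and follows essentially the same route as the paper: the same witnesses $(b_\beta, e_\beta)$ and $((b_i,a_i),e_f)$ in the contrapositive direction, and the same upgrades of conditions (\ref{en:delta}) and (\ref{en:f}) via Lemma~\ref{lem:trans}(\ref{en:bx}), (\ref{en:ss'}) and (\ref{en:tx}) in the forward direction. Your explicit ``weakening'' fact and the intermediate step $a \triangleleft^{\delta(x)} b$ are left implicit in the paper's proof, so your write-up is if anything slightly more careful.
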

\begin{proof}
Assume $a\not\leq b$, so either $\neg b\blacktriangleleft a$ or $\neg a\triangleleft^b b$, by $\sigma$.  In the former case $(b_\beta, e_\beta)\in b^\theta$ but $b_\beta\not\in \dom{a^\theta}$.  Otherwise $b\blacktriangleleft a$ and the latter case holds, but then $((b_i, a_i), e_f)\in a^\theta\setminus b^\theta$.  Either way, $a^\theta\not\sqsubseteq b^\theta$.

Now suppose $a\leq b$.  First we check that $\dom{b^\theta}\subseteq \dom{a^\theta}$.  
If $x\in \dom{b^\theta}$ there is $y\in X$ where $(x, y)\in b^\theta$.  It follows that $\delta(x)\blacktriangleleft b\blacktriangleleft a$, so $(x, e_\beta)\in a^\theta$ and $x\in \dom{a^\theta}$.  Secondly, if $x\in \dom{b^\theta}$ (so $\delta(x)\blacktriangleleft b$)
and $(x, y)\in a^\theta$ we know that \eqref{en:delta}--\eqref{en:f} hold for $a$, in particular $\delta(x)\blacktriangleleft a\circ \delta(y)$.  It follows that $\delta(x)\blacktriangleleft b\circ \delta(y)$,  by Lemma~\ref{lem:trans}\eqref{en:bx}, as required by \eqref{en:delta}. 
Conditions \eqref{en:i},\eqref{en:b} remain true for $b^\theta$.  For \eqref{en:f} if $x\in X_f$ then $\lambda(x)\triangleleft^{\delta(x)}a\circ \lambda(y)\triangleleft^{\delta(x)}b\circ\lambda(y)$, by Lemma~\ref{lem:trans}\eqref{en:tx}.
Hence $(x, y)\in b^\theta$, thus $a^\theta\sqsubseteq b^\theta$.

\end{proof}

\begin{lemma}
\label{lem:comp} Let $\s=(S, \leq, \circ)$ be a structure where $\circ$ is associative, $\leq$ is a partial order, $\s\models\sigma$ and let $\theta$ be from from  Definition~\ref{def:theta}.  For any $a,b \in \mathcal{S}$, we have  $(a \circ b)^\theta = a^\theta;b^\theta$.
\end{lemma}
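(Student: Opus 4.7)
The plan is to prove $(a\circ b)^\theta = a^\theta;b^\theta$ by two inclusions, verifying the four conditions of Definition~\ref{def:theta} in each case.

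For $a^\theta;b^\theta\subseteq(a\circ b)^\theta$, suppose $(x,z)\in a^\theta$ and $(z,y)\in b^\theta$; I would check \eqref{en:i}--\eqref{en:f} for $(x,y)$ with element $a\circ b$. Conditions \eqref{en:i} and \eqref{en:b} transfer immediately. For \eqref{en:delta}, the third clause in the recursive definition of $\btl_{n+1}$, applied with common left factor $a$, gives $a\circ\delta(z)\btl a\circ(b\circ\delta(y))$; combining this with the hypothesis $\delta(x)\btl a\circ\delta(z)$ via Lemma~\ref{lem:trans}\eqref{en:trans} and using associativity yields $\delta(x)\btl(a\circ b)\circ\delta(y)$. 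For \eqref{en:f}, first observe that $z\in X_f$: it cannot be in $X_i$ by \eqref{en:i} of $(x,z)\in a^\theta$, and it cannot be in $X_\beta$ because \eqref{en:b} would then force $y\in X_\beta$, contradicting $y\in X_f$. Hence $\delta(z)=\lambda(z)$, and the second clause of $\triangleleft^s_{n+1}$ (keeping the $a$-factors fixed) gives $a\circ\lambda(z)\triangleleft^{a} a\circ(b\circ\lambda(y))$; the auxiliary fact $\delta(x)\btl a$ (which follows from $\delta(x)\btl a\circ\delta(z)$ together with $a\circ\delta(z)\btl_0 a$ via transitivity) lets Lemma~\ref{lem:trans}\eqref{en:ss'} convert the superscript to $\delta(x)$, and a final transitivity step with the given $\lambda(x)\triangleleft^{\delta(x)} a\circ\lambda(z)$ finishes.

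For the reverse inclusion $(a\circ b)^\theta\subseteq a^\theta;b^\theta$, the task is to construct a witness $z$ from $(x,y)\in(a\circ b)^\theta$. I would split on the shape of $y$ (noting $y\notin X_i$ by \eqref{en:i}): if $y\in X_\beta$, take $z=(b\circ\delta(y))_\beta\in X_\beta$; if $y\in X_f$, take $z=(b\circ\lambda(y))_f\in X_f$ (here \eqref{en:b} also forces $x\in X_i\cup X_f$). In both cases $\delta(z)=b\circ\delta(y)$, and in the second case $\lambda(z)=b\circ\lambda(y)$ as well. Under these choices, conditions \eqref{en:delta} and \eqref{en:f} for $(x,z)\in a^\theta$ collapse, after associativity, to exactly the hypotheses \eqref{en:delta} and \eqref{en:f} for $(x,y)\in(a\circ b)^\theta$, while the four conditions for $(z,y)\in b^\theta$ are either vacuous or reduce to trivialities such as $b\circ\delta(y)\btl_0 b\circ\delta(y)$ and $b\circ\lambda(y)\triangleleft^{b\circ\lambda(y)}_0 b\circ\lambda(y)$.

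The main obstacle is the \eqref{en:f} clause of the forward inclusion, where the two $\triangleleft^s$-assertions $\lambda(x)\triangleleft^{\delta(x)} a\circ\lambda(z)$ and $\lambda(z)\triangleleft^{\lambda(z)} b\circ\lambda(y)$ carry different superscripts and cannot be composed by a single transitivity step. The key maneuver is to first build $a\circ\lambda(z)\triangleleft^{a} a\circ(b\circ\lambda(y))$ via the product rule, then to rewrite its superscript from $a$ to $\delta(x)$ using Lemma~\ref{lem:trans}\eqref{en:ss'} together with the derived fact $\delta(x)\btl a$, and only then to apply transitivity. With this three-step pattern identified, the remainder of the proof, including the witness construction for the reverse inclusion, is bookkeeping with associativity.
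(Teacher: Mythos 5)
Your proof is correct and follows essentially the same route as the paper: the same two-inclusion structure, the same case split on $y\in X_f$ versus $y\in X_\beta$, and the same witnesses $(b\circ\lambda(y))_f$ and $(b\circ\delta(y))_\beta$ for the reverse inclusion. The only difference is cosmetic: where you unfold the congruence, superscript-conversion and transitivity steps by hand in the forward direction, the paper packages them as Lemma~\ref{lem:trans}\eqref{en:bx} and \eqref{en:tx}.
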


\begin{proof}
First, let's show that $ a^\theta;b^\theta \subseteq (a;b)^\theta$. Take any $(x, y) \in a^\theta$ and $(y,z)\in b^\theta$.  We have $\delta(x)\blacktriangleleft a\circ\delta(y)$ and $\delta(y)\btl b\circ \delta(z)$, so $\delta(x)\btl a\circ b\circ\delta(z)$, by Lemma~\ref{lem:trans}\eqref{en:bx}.  By \eqref{en:i},  $z\not\in X_i$ and by \eqref{en:b} if $x\in X_\beta$ then $y\in X_\beta$ and then $z\in X_\beta$. Also by \eqref{en:i}  $y\not\in X_i$ so if $z\in X_f$ then $y\in X_f$ and $\lambda(y)=\delta(y)$.  Then $x\in X_i\cup X_f$ and $\lambda(x)\triangleleft^{\delta(x)} a\circ\lambda(y),\; \lambda(y)\triangleleft^{\\   \lambda(y)}b\circ\lambda(z)$, so $\lambda(x)\triangleleft^{\delta(x)}a\circ b\circ\lambda(z)$, by Lemma~\ref{lem:trans}\eqref{en:tx}.  Hence $(x, z)\in(a\circ b)^\theta$.

Conversely, to show that $(a \circ b)^\theta \subseteq a^\theta;b^\theta$, take any $(x,y)\in (a \circ b)^\theta$.    By \eqref{en:i} $y\not\in X_i$.  If $y\in X_\beta$ let $z=(b\circ\delta(y))_\beta\in X_\beta$ (Figure~\ref{fig:witness} right) and then $(x, z)\in a^\theta,\;(z, y)\in b^\theta$.  Otherwise, $y\in X_f$ and we let $z=(b\circ\lambda(y))_f\in X_f$ (Figure~\ref{fig:witness} left) and again we have $(x, z)\in a^\theta,\; (z, y)\in b^\theta$, as required.

\begin{figure}
\begin{center}
 \begin{tikzpicture}
    \coordinate(o) at (3,2.3);
    \node[draw, circle](s) at (0,0){$x$};
    \node[draw, circle](t) at (2.25,0){$z_f$};
    \coordinate(s') at (0,1);
    \node[draw,circle](u) at (4.5,0){$y$};
    
    \path (s) edge[->-] node[left]{$\lambda(x)\,\,\,$} (o);
    \path (t) edge[->-] node[below left, pos=0.4]{$b \circ \lambda(y)$} (o);
    \path (s) edge[->-, dashed] node[below]{$a$} (t);
    \path (s) edge[->] node[left]{$\delta(x)$}(s');
    \path (t) edge[->-, dashed] node[below]{$b$} (u);
    \path (u) edge[->-] node[right]{$\,\lambda(y)=\delta(y)$} (o);
    \path (s) edge[->-, bend right] node[below]{$a\circ b$} (u);

    \coordinate(o1) at (7,3);
    \node[draw, circle](s) at (6,0){$x$};
    \node[draw, circle](t) at (7.6,0.7){$z_\beta$};
    \coordinate(s') at (6,1);
    \coordinate(t') at (7.6,1.7);
    \node[draw,circle](u) at (9,0){$y$};
    \coordinate(u') at (9,1);
    
    \node[above] at (o) {$e_f$};

    \path (s) edge[->-, dashed] node[below right]{$a$} (t);
    \path (s) edge[->] node[left] {$\delta(x)$} (s');

    \path (t) edge[->] node[right]{$b\circ \delta(y)$}(t');
    \path (t) edge[->-, dashed] node[below left]{$b$} (u);
    \path (u) edge[->] node[right]{$\delta(y)$} (u');
    \path (s) edge[->-, bend right] node[below]{$a\circ b$} (u);
\end{tikzpicture}  
\end{center}
\caption{\label{fig:witness} Witness  for $(x,y)\in (a \circ b)^\theta$ where $x\in X_i\cup X_f$ and  $y\in X_f$ (left), $x\in X,\;y\in X_\beta$ (right).}
\end{figure}
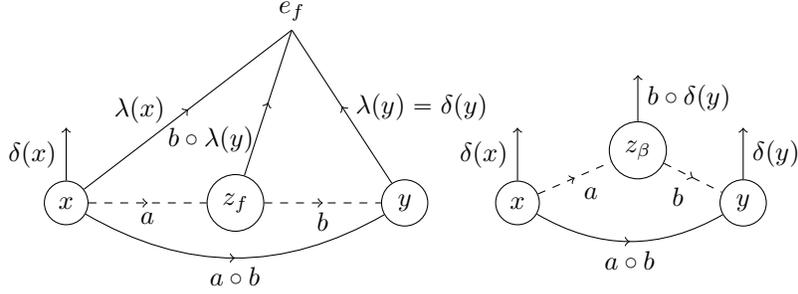
\end{proof}

\begin{theorem} \label{thm:ax}
$R(\dleq, ;)$ is axiomatised by partial order, associativity and $\{\sigma_n:n<\omega\}$.
Finite structures $\mathcal{S} \in R(\dleq, ;)$ are representable over a finite base $X$ with $|X| \leq (1+|\s|)^2+2\cdot(1+|\s|)$.
\end{theorem}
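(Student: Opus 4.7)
The plan is to verify soundness and completeness separately, then extract the finite-base bound from the construction. Soundness is essentially immediate: partial order and associativity clearly hold in any concrete $(\dleq, ;)$-structure, and Lemma \ref{lem:sound} gives $R(\dleq, ;) \models \sigma$, from which $R(\dleq, ;) \models \sigma_n$ follows for every $n$ since $\btl_n$ and $\triangleleft^s_n$ each entail their unindexed versions.

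For completeness, given $\s$ satisfying partial order, associativity, and every $\sigma_n$ (equivalently $\sigma$), I would adjoin a fresh identity $e$ to form $\s'$ as described before Definition \ref{def:theta}, apply that definition to $\s'$ to obtain the map $\theta$ on the base $X$, and then invoke Lemmas \ref{lem:ref} and \ref{lem:comp} to conclude that $\theta$ represents $\s'$; the restriction of $\theta$ to $\s$ is then a representation of $\s$, since $\s$ is closed under $\circ$ and $\theta$ remains injective and preserves $\leq$ and $\circ$ when restricted. The hypothesis of those lemmas not immediately inherited is that $\s' \models \sigma$, which calls for a short auxiliary argument: by simultaneous induction on $n$, in $\s'$ the predicates $e \btl_n x$ and $e \triangleleft^s_n x$ each force $x = e$, because $e = y \circ z$ in $\s'$ implies $y = z = e$ and because $e \not\leq y$ and $y \not\leq e$ for every $y \in \s$. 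One then checks that any derivation of $b \btl a$ or $a \triangleleft^b b$ in $\s'$ with $a, b \in \s$ can be converted to one in $\s$ by collapsing steps that pass through $e$, so the only instances of $\sigma$ in $\s'$ with nontrivial premises are those between elements of $\s$, which are covered by the assumption $\s \models \sigma$.

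The finite-base bound falls out by inspection of Definition \ref{def:theta}: if $\s$ is finite with $|\s| = k$, then $|\s'| = k+1$, so $|X_i| \leq (k+1)^2$ and $|X_f|, |X_\beta| \leq k+1$, giving $|X| \leq (1+|\s|)^2 + 2(1+|\s|)$.

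The main obstacle I anticipate is the bookkeeping required for $\s' \models \sigma$: the three mutually recursive clauses defining $\btl_{n+1}$ and $\triangleleft^s_{n+1}$ admit several places where $e$ might be inserted as a witness (either as an intermediate step, as a factor in a decomposition $a = d \circ f$, or through the $\triangleleft^{s'}$-clause), and showing that each such occurrence can be eliminated without increasing rank needs a careful parallel induction keyed on the observation that $e$ behaves as a $\btl$- and $\triangleleft^s$-isolated point among elements of $\s$.
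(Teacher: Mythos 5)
Your argument is the paper's own, step for step: soundness via Lemma~\ref{lem:sound}, completeness by adjoining the identity to form $\s'$, applying Definition~\ref{def:theta} to $\s'$ and citing Lemmas~\ref{lem:ref} and~\ref{lem:comp}, and the cardinality bound by counting $X_i$, $X_f$, $X_\beta$ over $\s'$. The one place you go beyond the paper is in insisting on a verification that $\s'\models\sigma$; the paper is silent on this, yet it is genuinely required, since Lemmas~\ref{lem:ref} and~\ref{lem:comp} are stated for a structure that already contains the identity and satisfies $\sigma$. Your observation that $e\btl_n x$ and $e\triangleleft^s_n x$ force $x=e$ is correct and disposes of every instance of $\sigma$ in $\s'$ mentioning $e$. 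Be warned, though, that the remaining instances cannot be handled by literally ``collapsing'' $e$-steps, because the identity strictly enlarges the set of derivable $\triangleleft$-facts: since $e\triangleleft^e_0 e$ and $s\btl_0 e$ hold for every $s$, clause three gives $e\triangleleft^s_1 e$, and then taking $c=c'=e$, $d=a$, $d'=b$ in the second clause of the definition of $\triangleleft^s_{n+1}$ turns $a\triangleleft^a_n b$ into $a\triangleleft^s b$ in $\s'$ for \emph{every} $s$ --- a step available in $\s$ only when $s\btl a$. So the induction should carry the weaker invariant that, for $a,b,s\in\s$, \/ $\s'\models a\triangleleft^s_n b$ implies $\s\models a\triangleleft^s_m b$ or $\s\models a\triangleleft^a_m b$ for some $m$ (together with the unproblematic claim that $\s'\models a\btl_n b$ implies $\s\models a\btl_m b$). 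This suffices: in the only instance of $\sigma$ that matters one also has the premise $b\btl a$, and $a\triangleleft^a_m b$ together with $b\btl_m a$ yields $a\triangleleft^b_{m+1}b$ by the third clause of the definition, after which $\sigma$ in $\s$ delivers $a\leq b$. With that adjustment your plan goes through, and is in fact more complete than the proof printed in the paper.
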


\begin{proof}
Soundness of partial order, associativity is clear, soundness of   $\sigma_n$ is from Lemma~\ref{lem:sound}.    For completeness, take any associative, partially ordered $(\leq, \circ)$-structure $\s\models\set{\sigma_n:n<\omega}$.  We may define $\s'$ be adding a new identity $e$ to $\s$ unordered with other elements.   By  Lemmas~\ref{lem:ref} and \ref{lem:comp}, the map $\theta$ of Definition~\ref{def:theta} is a $(\sqsubseteq, ;)$-representation of $\s'$, hence it restricts to a $(\sqsubseteq, ;)$-representation of $\s$.  The representation $\theta$ has 
base contained in a disjoint union of a copy of $(\s')^2$ and two copies of $\s'$.
\end{proof}

\section{$R(\sqsubseteq, ;)$ is not Finitely Axiomatisable}

\begin{definition}
Let $n < \omega,\; N = 1+2^{n}$ and let $\mathcal{S}_n$ be a $(\leq, \circ)$-structure whose underlying set $S_n$ has $3+3N$ elements

$$S_n = \{0,b,c\} \cup \{a_i, a_ib, a_ic :  i < N\}$$
where  composition $\circ$ is defined by   $a_i\circ b = a_ib, \;a_i \circ c = a_ic$ (all $i < N$) and all other compositions result in $0$, \/ and 
the refinement operation is defined as the reflexive closure of

$$\{(s,0) : s \in S_n\} \cup \{(a_{i  + 1}b ,a_i), (a_i, a_{i  + 1}c), (a_{i}b, a_{i}c):i<N \}$$ 
where here and below the operator $+$ denotes addition modulo $N$.
\end{definition}

 Observe that $\circ$ is associative and $\leq$ is a partial order.
 \begin{lemma}
 For $n\geq 2$, \/ $\s_n$ is not representable, but $\s_n\models\sigma_k$ for $k< n$.
 \end{lemma}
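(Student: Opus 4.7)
My plan is to prove both halves by analysing the reach of the predicates $\blacktriangleleft_k$ and $\triangleleft^s_k$ on the cyclic structure $\s_n$, exploiting the arithmetic fact that $N = 1 + 2^n$.

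For non-representability, I plan to exhibit a pair $(a, b)$ with $b \blacktriangleleft a$, $a \triangleleft^b b$, yet $a \not\leq b$; by Lemma~\ref{lem:sound} this is incompatible with $\s_n \in R(\sqsubseteq, ;)$. The trivial fact $a_0 b \blacktriangleleft_0 a_0$ (since $a_0 b = a_0 \circ b$) combined with $a_0 \not\leq a_0 b$ (an easy check of the atomic refinements) motivates taking $a = a_0$ and $b = a_0 b$. I plan to derive $a_0 \triangleleft^{a_0 b} a_0 b$ by chaining through the cyclic refinements $a_i \leq a_{i+1} c$ and $a_{i+1} b \leq a_i$ together with the diagonals $a_i b \leq a_i c$, using the transitive clause of the recursive definition repeatedly to double the reach around the cycle of length $N$.

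For $\s_n \models \sigma_k$ when $k < n$, I plan to define a ``cycle distance'' on $\s_n$ and prove by induction on $k$ a reach bound: if $\s_n \models a \blacktriangleleft_k b$ or $\s_n \models a \triangleleft^s_k b$, then the cycle distance between the involved elements is at most $2^k$. The base case is a direct check of the definitions. The inductive step must handle the three clauses: transitivity doubles the reach (which dictates the $2^k$ bound); composition applies only to the decomposable elements $a_i b = a_i \circ b$ and $a_i c = a_i \circ c$, since no other non-trivial decompositions exist in $\s_n$; and subscript change is bounded by the inductive hypothesis on $\blacktriangleleft_k$. Since $2^k < 2^n = N - 1$ for $k < n$, none of these derivations close the cycle. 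A finite case analysis then shows that any pair $(a, b)$ with $\s_n \models b \blacktriangleleft_k a \wedge a \triangleleft^b_k b$ already satisfies $a \leq b$ in $\s_n$, establishing $\sigma_k$.

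The main obstacle is the inductive reach bound, particularly the composition clause: since decomposing $a_i b$ as $a_i \circ b$ re-introduces the label $a_i$, one must verify carefully that no shortcut arises through such decomposition letting $\blacktriangleleft_k$ or $\triangleleft^s_k$ exceed the $2^k$ bound. Once this reach bound is in hand, both the depth calculation for the non-representability witness and the $\sigma_k$ verification reduce to finite, routine case analysis.
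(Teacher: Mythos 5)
There is a genuine gap in your first half: the proposed witness fails because $a_0 \triangleleft^{a_0b} a_0b$ is \emph{false} in $\s_n$. The obstruction is the superscript. Every base fact $t\triangleleft^s_0 u$ forces $s=u$, so the only $\triangleleft_0$-facts with first coordinate $a_0$ are $a_0\triangleleft_0^0 0$, $a_0\triangleleft_0^{a_0}a_0$ and $a_0\triangleleft_0^{a_1c}a_1c$. Since $a_0$ is not a product in $\s_n$ (the only products are $a_ib$, $a_ic$ and $0$), the composition clause can never produce a pair with first coordinate $a_0$, and the two remaining clauses --- transitivity, which keeps the superscript fixed, and superscript change, which only replaces $s'$ by some $s\btl s'$ --- preserve the first coordinate. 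Now $a_0b\btl s'$ holds only for $s'$ in the clique $\set{a_j, a_jb: j<N}$; in particular $a_0b\not\btl 0$ and $a_0b\not\btl a_1c$, so neither $a_0\triangleleft^0 0$ nor $a_0\triangleleft^{a_1c}a_1c$ can be imported into $\triangleleft^{a_0b}$, and under the superscript $a_0b$ the element $a_0$ only ever reaches $a_0$ itself. Chaining around the cycle does not help: the facts $a_i\triangleleft_0^{a_{i+1}c}a_{i+1}c$ and $a_ib\triangleleft_0^{a_ic}a_ic$ all carry superscripts of the form $a_jc$, which are not $\btl$-above $a_0b$, so they cannot be combined under the superscript $a_0b$ at all. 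Note also an internal tension: if your claim held at some finite level $m$, then together with $a_0b\btl_0 a_0$ it would falsify $\sigma_m$; the cycle distance between $a_0$ and $a_0b$ is zero, so your own reach bound would not force $m\geq n$, and the second half of the lemma would collapse.

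The paper's witness reverses the roles of the two premises: it takes $a=a_ib$, $b=a_{i+1}c$, for which the $\triangleleft$ premise is cheap ($a_ib\triangleleft_1^{a_{i+1}c}0\triangleleft_2^{a_{i+1}c}a_{i+1}c$, hence $a_ib\triangleleft_3^{a_{i+1}c}a_{i+1}c$, using the composition clause on $a_i\circ b$ and $a_{i+1}c\circ b=0$), while the $\btl$ premise $a_{i+1}c\btl a_ib$ is the one requiring $2^n$ steps around the cycle and hence level about $n$. Your second half --- a $2^k$ reach bound for $\btl_k$ followed by a case analysis of which pairs satisfy $a\triangleleft^b b$ --- is in spirit exactly what the paper does, via an explicit exhaustive enumeration of $\btl$ and $\triangleleft$ concluding that the only pairs with $a\triangleleft^b b$ and $a\not\leq b$ are $(a_ib, a_{i+1}c)$, and that $a_{i+1}c\btl_k a_ib$ fails for $k<n$. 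But that enumeration is where the real work lies and where your erroneous belief about $a_0\triangleleft^{a_0b}a_0b$ would surface, so it cannot be deferred as routine.
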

 
 \begin{proof}
 Since $s\leq 0$   we have $0\btl_0 s$. Also, for $i<N$,  since $a_{i+1}\circ b\leq a_i$ we have 
 $a_i\btl_0 a_{i+1}b\btl_0 a_{i+1},\; a_i\btl_0 a_{i+1}$, so $a_i\btl_k a_{i+2^{k}}$ for $k\geq 1$, using Lemma~\ref{lem:trans}\eqref{en:trans}.  Hence
 $\set{a_i, a_ib:i<N}$ is a clique of $\btl_n$, but for $k<n$ we do not have $a_{i+1}\btl_k a_i$ nor do we have $a_{i+1}c\btl_k a_ib$. 
 
 For $\triangleleft$, we have
 \begin{itemize}
     \item 
$t \triangleleft_0^s u$ iff $t \leq u$ and $s = u$, i.e.
  $s \triangleleft^s_0 s, \; s\triangleleft^0_00$ (all $s$),  $a_{i+1}b\triangleleft_0^{a_i} a_{i},\; 
  a_i\triangleleft_0^{a_{i+1}c} a_{i+1}c$ and  $a_ib \triangleleft_0^{a_{i}c} a_{i}c$  (all $i<N$), but $\triangleleft_0$ holds in no other cases.
  
  \item   
  Since $a_{i+1}b\circ c=0$ and $a_{i+1}b\triangleleft_0^{a_{i}}a_{i}$, it follows by Lemma~\ref{lem:trans}\eqref{en:tx} that $0\triangleleft_1^{a_i} a_{i}c$, similarly, $0\triangleleft_1^{a_i} a_{i}b$.
  Also by Lemma~\ref{lem:trans}\eqref{en:tx}, since $a_{i}\triangleleft_0^{a_{i+1}c}a_{i+1}c$ and $a_{i+1}c\circ b=0$ we get $a_{i}b\triangleleft_1^{a_{i+1}c}0$, similarly $a_ic\triangleleft_1^{a_{i+1}c}0$. And  from $a_i \triangleleft^{a_i}_{0} a_i$ and $c \triangleleft^c_0 c$ we get $a_ic \triangleleft^{a_i}_1 a_ic$, similarly $a_ib\triangleleft_1^{a_i}a_ib$.   The only non-zero products are $a_i\circ b$ and $a_i\circ c$, so the only remaining case of $\triangleleft_1$ we obtain from Lemma~\ref{lem:trans}\eqref{en:tx} is  $0 \triangleleft^s_1 0$, which follows
  since $s \triangleleft^s_0 s$, for all $s \in S_n$ and $0 \triangleleft_0^0 0$.  
  By Lemma~\ref{lem:trans}\eqref{en:ss'},    from $a_{i+1}b\triangleleft_0^{a_i}a_i$ we get $a_{i+1}b\triangleleft_1^{s}a_i$ for $s\btl_0a_i$.
      This concludes the exhaustive enumeration of elements in $\triangleleft_1$, not covered by $\triangleleft_0$.

\item If $a\triangleleft_1^sb$ and $s'\btl_1 s$ we get $a\triangleleft_2^{s'}b$, in particular $0\triangleleft_2^{a_ic}a_ic$.

 \item  Since $a_ib\triangleleft_1^{a_{i+1}c}0\triangleleft_2^{a_{i+1}c}a_{i+1}c$, it follows by Lemma~\ref{lem:trans}\eqref{en:trans} that  $a_{i}b\triangleleft_3^{a_{i+1}c}a_{i+1}c$.
 
 \item The remaining cases of $\triangleleft$ can be enumerated as follows..  We have  $0\triangleleft^{s} a_{i+1}c, \; 0\triangleleft^{s} a_{i+1}b, \; a_ic \triangleleft^s a_ic$ for $s\btl a_0$, by Lemma~\ref{lem:trans}\eqref{en:bx}. Additionally, since $a_{i+1}b \triangleleft^s a_i$, we get  $0 \triangleleft^s a_i$, by Lemma~\ref{lem:trans}\eqref{en:trans}.   Also by Lemma~\ref{lem:trans}\eqref{en:trans}, for any $s \in S_n$ since  $s \triangleleft^0 0$ and $0 \triangleleft^0 a_ib, \; 0 \triangleleft^0 a_ic, \; 0 \triangleleft^0 a_i$, we have $s \triangleleft^0 a_ib, \; s \triangleleft^0 a_ic, \; s \triangleleft^0 a_i$, and if  $a \in \{a_ib, a_ic:i<N\}$, \/ and $b \in\set{a_i,a_ib, a_ic:i<N}$ we have $a \triangleleft^{a_{i+1}c} b$.
 \end{itemize}
 
This covers all triples $(a, s, b)$ where $a\triangleleft^{s}b$.  
 It follows that $\s_n\not\models\sigma_{n+1}$ for $n\geq 2$, since $a_{i+1}c\btl_1 a_{i+1}\btl_n  a_ib,\;a_ib\triangleleft_3^{a_{i+1}c}a_{i+1}c$ but $\s_n\not\models a_{i}b\leq a_{i+1}c$.    By Theorem~\ref{thm:ax}, $\s_n$ is not representable. The only cases where $a\triangleleft^bb$ and $a\not\leq b$ are $a_{i}b\triangleleft^{a_{i+1}c}a_{i+1}c$, but for $k<n$ we do not have $a_{i+1}c\btl_k a_ib$, hence $\s_n\models\sigma_k$.

 \end{proof}

\begin{theorem}\label{thm:nfa}
$R(\sqsubseteq, ;)$ cannot be defined by finitely many axioms.
\end{theorem}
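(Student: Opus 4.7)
The plan is a standard compactness argument, building directly on Theorem~\ref{thm:ax} and the preceding lemma. Suppose for contradiction that $R(\sqsubseteq, ;)$ is axiomatisable by a finite set $\Sigma$ of first-order $(\leq, \circ)$-sentences. By Theorem~\ref{thm:ax}, the class $R(\sqsubseteq, ;)$ is also axiomatised by partial order, associativity, and $\{\sigma_n : n < \omega\}$, so every sentence in $\Sigma$ is a semantic consequence of this infinite axiomatisation. Since $\Sigma$ is finite, by the compactness theorem there exists some $N < \omega$ such that $\Sigma$ already follows from partial order, associativity, and $\{\sigma_n : n \leq N\}$.

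Next, I would pick any $n \geq \max(2, N+1)$ and consider the structure $\s_n$ from the previous definition. By the previous lemma, $\s_n$ satisfies partial order, associativity, and $\sigma_k$ for every $k < n$, hence in particular for every $k \leq N$. Consequently, $\s_n$ satisfies the finite axiomatisation $\Sigma$, so $\s_n \in R(\sqsubseteq, ;)$. But the same lemma states that $\s_n$ is not representable, a contradiction. Therefore no finite set of first-order $(\leq, \circ)$-axioms can define $R(\sqsubseteq, ;)$.

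The only step requiring any thought is the appeal to compactness to collapse the infinite axiomatisation down to finitely many $\sigma_n$; this works because each sentence of $\Sigma$ individually has a finite proof from the infinite axiomatisation and we take the maximum index across the finitely many sentences in $\Sigma$. All the genuine combinatorial work has already been carried out in constructing the sequence $\s_n$ and verifying that it exhibits the desired separation, so I expect no real obstacle in writing up this concluding argument — it is just the standard ``witness sequence plus compactness'' template applied to the family $\s_n$.
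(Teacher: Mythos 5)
Your proposal is correct. You use exactly the same witness structures $\s_n$ and the same two ingredients --- the completeness half of Theorem~\ref{thm:ax} and the lemma that $\s_n$ is non-representable yet satisfies $\sigma_k$ for all $k<n$ --- so all the substantive content coincides with the paper's. The only divergence is in the final model-theoretic step: the paper forms a non-principal ultraproduct $\Pi_U\s_n$, which is associative, partially ordered and satisfies every $\sigma_k$, hence lies in $R(\sqsubseteq,;)$ by Theorem~\ref{thm:ax}, contradicting \Los' theorem applied to a putative finite axiomatisation; you instead invoke compactness of first-order consequence to bound the indices of the $\sigma_n$ actually needed to derive the hypothetical finite axiom set $\Sigma$, and then exhibit a single $\s_n$ with $n>N$ as a direct counterexample. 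Both arguments are standard and interchangeable here; yours is marginally more elementary in that it avoids ultraproducts entirely and produces an explicit finite counterexample structure, while the paper's version packages the whole sequence into one limit object. One point worth making explicit in a write-up: the compactness step uses that each sentence of $\Sigma$ is a consequence of the recursive axiomatisation and that $\Sigma$ is finite, so a single $N$ bounds all the $\sigma_n$ used --- you do say this, and it is right.
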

\begin{proof}
Each structure $\s_n\not\in R(\sqsubseteq, ;)$.  For any $k<\omega$ almost all $\s_n$ satisfy $\sigma_k$ (in fact, all $\s_n$ where $n > k$) and they are all associative and partially ordered, hence  any non-principal ultraproduct $\s=\Pi_U\s_n$ is associative, partially ordered and satisfies all $\sigma_k$s, so  by Theorem~\ref{thm:ax}, $\s\in R(\sqsubseteq, ;)$.  By \Los' theorem, $R(\sqsubseteq, ;)$ has no finite axiomatisation.
\end{proof}

\section{Finite axiomatisability and representability}
For any relation algebra signature $\Sigma$, the representation class $R(\Sigma)$ may be finitely axiomatisable or not, and it may be that finite representable structures have finite representations or not.  All four combinations of these two properties are possible. 

\begin{theorem}
The representation class $R(\Sigma)$ is finitely axiomatisable, and finite structures in $R(\Sigma)$ have finite representations, according to the following incomplete table.
\[
\begin{array}{l | ll}
& \mbox{fin. ax.}&\mbox{not fin. ax.}\\
\hline
\mbox{fin. rep}&(\subseteq,D,R,{}^\smile,;)&(\sqsubseteq, ;) \\
&(\subseteq,;),(\sqsubseteq, *)&\\ \\
\mbox{not fin. rep.}&(\cap,;)&
(\cap,\cup,;)\subseteq\Sigma\\
&&(\subseteq, \setminus,;)\subseteq\Sigma
\end{array} 
\]
where $\Sigma' \subseteq \Sigma$ signifies the language characterised by $\Sigma$ being an expansion of the language characterised by $\Sigma'$.
\end{theorem}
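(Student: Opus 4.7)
The plan is to verify each of the four cells of the table by combining the results of this paper with appeals to prior literature. The only cell requiring genuinely new mathematical content is $(\sqsubseteq,;)$ in the row ``fin.\ rep.''\ and column ``not fin.\ ax.'', and that is exactly what Theorems~\ref{thm:ax} and \ref{thm:nfa} establish: the former provides an explicit finite base of size at most $(1+|\s|)^2 + 2(1+|\s|)$ for any finite representable structure, while the latter rules out any finite axiomatisation by taking a non-principal ultraproduct of the sequence $\s_n$.

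For the remaining entries of the top row I would invoke standard representation-theoretic results already referenced in Section~1. The entry $(\subseteq,;)$ is Zareckii's theorem~\cite{zareckii1959representation}, and the finite representation property is immediate from the explicit construction $(x,y)\in s^\theta\iff y\leq x\circ s$ recalled in the introduction, whose base is contained in $\s'$, so it is finite whenever $\s$ is. The entry $(\sqsubseteq,*)$ is likewise axiomatised by the ordered semigroup axioms (as already remarked in Section~1), with an analogous finite-base construction. For the full signature $(\subseteq,D,R,{}^\smile,;)$ the finite axiomatisation together with the finite representation property is classical and would simply be cited.

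For the bottom row the plan is essentially bibliographic. The entry $(\cap,;)$ is finitely axiomatisable, but there exist finite $(\cap,;)$-structures whose smallest representation has infinite base; this follows from known negative results in the literature on meet-semilattice algebras of relations. For any signature $\Sigma$ extending either $(\cap,\cup,;)$ or $(\subseteq,\setminus,;)$, both non-finite axiomatisability and failure of the finite representation property follow from established results of Andr\'eka, Hirsch, Hodkinson, and collaborators. One must additionally observe that both negative properties are preserved as the signature is expanded: a counterexample structure for a sub-signature lifts to the richer signature by interpreting the extra operations concretely in any putative representation, so finite non-axiomatisability and failure of the finite base property persist.

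The main obstacle is not mathematical but attributional: locating the correct references for each pre-existing entry and verifying that the preservation of the two failure properties up the $\subseteq$-ordering on signatures is stated in exactly the form the table requires. The mathematics for the central cell $(\sqsubseteq,;)$ is already in hand from the preceding sections.
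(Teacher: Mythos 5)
Your treatment of the top row and of the central $(\sqsubseteq,;)$ cell matches the paper exactly: cite Theorems~\ref{thm:ax} and \ref{thm:nfa} for the new signature, Zareckii's construction (with its evidently finite base) for $(\subseteq,;)$ and $(\sqsubseteq,*)$, and the classical results for $(\subseteq,D,R,{}^\smile,;)$. The bottom row, however, is not ``essentially bibliographic'' in the way you assume, and this is where your proposal has two genuine gaps. First, the finite axiomatisability of $R(\cap,;)$ is not merely cited in the paper: it is proved afresh as Proposition~\ref{prop:fin}, by a representation game in which $\exists$ maintains networks labelled by upward-closed sets and extends them with a single new witness node $z$ satisfying $N'(w,z)=N(w,x);a^\uparrow$ and $N'(z,w)=b^\uparrow;N(y,w)$. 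If you intend to lean on the literature here you must actually locate such a reference; otherwise you owe an argument, and the semilattice axioms alone do not obviously suffice without the game construction.

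Second, for the cell $(\subseteq,\setminus,;)\subseteq\Sigma$ your plan quietly elides a real issue. The known undecidability result for finite representability (\cite{Neuz}) concerns complementation interpreted relative to the universal relation $X\times X$, whereas the notion of representation used in this paper interprets $-$ as complementation relative to an arbitrary maximal represented relation. The paper therefore needs new content, Proposition~\ref{prop:neg}, showing that the point algebra $\c P$ is $(\subseteq,\setminus,;)$-representable over $\rats$ but admits no finite representation (Claims 1 and 2 there construct arbitrarily many distinct points $y_0,\dots,y_n$ forced by $(-g)\leq(-g)\circ g$ and irreflexivity of $g$). Your ``lift the counterexample by interpreting the extra operations concretely'' remark does not supply this. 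You should also make explicit the mechanism the paper uses for the final quadrant: undecidability of the finite representation problem rules out a finite axiomatisation directly, and rules out the finite representation property because the set of formulas valid over $R(\Sigma)$ is recursively enumerable, so FRP would yield a decision procedure. You assert both conclusions but never say why undecidability delivers them.
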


\begin{proof}
Finite axiomatisability of $R(\subseteq, D, R, {}^\smile,;)$ is proved in \cite{Bre77} and the finite representation  property for this signature is proved in \cite{HM13}.
 Both $R(\subseteq, ;)$ and $R(\sqsubseteq, *)$ are defined by the axioms of ordered semigroups and have the finite representation property \cite{zareckii1959representation,HMS:20}.

The finite representation property is proved for $R(\sqsubseteq, ;)$  in  Theorem~\ref{thm:ax}, non-finite axiomatisability is proved in Theorem~\ref{thm:nfa}. The failure of the finite representation property for signatures containing $(\cap,;)$ is proved in \cite{neuzerling2017representability}, finite axiomatisability of  $R(\cap, ;)$ is proved in Proposition~\ref{prop:fin} below.

For the final quadrant of the diagram, if the representation problem for finite structures in $R(\Sigma)$ is undecidable, we know that there can be no finite axiomatisation, and since the set of formulas valid over $R(\Sigma)$ is recursively enumerable the finite representation property cannot hold.  The representation problem for finite structures is proved undecidable for signatures containing $(\cap,\cup, ;)$ in \cite{HJ12} and for signatures containing $(\subseteq, -, ;)$, where negation is interpretted as complementation relative to a universal relation
 $X\times X$, in \cite{Neuz}. We extend that result to prove failure of the finite representation property  for representations  where $-$ denotes complementation relative to an arbitrary maximal binary relation in Proposition~\ref{prop:neg}, below. 
 
\end{proof}

\begin{proposition}\label{prop:fin}
$R(\cap, ;)$ is finitely axiomatisable.
\end{proposition}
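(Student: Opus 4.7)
The plan is to exhibit a finite equational axiomatisation of $R(\cap, ;)$ and prove completeness by a construction analogous to Definition~\ref{def:theta}. The axioms I would propose are: (A1) the meet-semilattice laws for $\cap$ (idempotence, commutativity, associativity); (A2) associativity of $;$; and (A3) monotonicity of $;$ in both arguments with respect to the induced partial order $a \leq b \iff a \cap b = a$, namely $(a \cap b)\circ c \leq a\circ c$ and $c\circ(a\cap b) \leq c\circ a$. Soundness over $R(\cap, ;)$ is immediate since set intersection is a meet-semilattice, relational composition is associative, and composition is monotone under subset inclusion.

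For completeness, given an abstract $(\cap, \circ)$-structure $\mathcal{S}$ satisfying (A1)--(A3), I would construct a representation $\theta$ via a labelled-base construction reminiscent of Definition~\ref{def:theta}. Adjoin a two-sided identity $e$ to obtain $\mathcal{S}'$, incomparable to all other elements. The base $X$ would consist of labelled points carrying ``witness'' labels $\delta(x)\in\mathcal{S}'$, with a pair $(x, y)$ placed in $a^\theta$ precisely when the labels satisfy a Zareckii-style condition like $\delta(y) \leq \delta(x) \circ a$. Faithfulness of $\leq$ then follows by producing, for any $a \not\leq b$, a witness pair with $\delta(x) = e$ and $\delta(y) = a$; the composition equation $(a\circ b)^\theta = a^\theta; b^\theta$ is witnessed by choosing the intermediate point to have label $\delta(x) \circ a$, mimicking the argument of Lemma~\ref{lem:comp}.

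The main obstacle, and the reason a single-copy base does not suffice, is the intersection clause: if one naively defines $a^\theta = \{(x, y) : \delta(y) \leq \delta(x) \circ a\}$ over a single copy of $\mathcal{S}'$, then membership of $(x,y)$ in $a^\theta \cap b^\theta$ only forces $\delta(y) \leq \delta(x) \circ a$ and $\delta(y) \leq \delta(x) \circ b$, which does not imply $\delta(y) \leq \delta(x) \circ (a \cap b)$, since composition need not distribute over meet. To repair this, I would enrich the base with branching copies of $\mathcal{S}'$ indexed by elements of $\mathcal{S}$ (in the spirit of the $X_\beta$ component in Definition~\ref{def:theta}), so that distinct common lower bounds of $a$ and $b$ generate witnesses in distinct branches and cannot accidentally be placed in $a^\theta \cap b^\theta$ without being genuine witnesses of $a\cap b$. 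Designing the branch structure so that $(a \cap b)^\theta = a^\theta \cap b^\theta$ holds without disturbing $(a \circ b)^\theta = a^\theta; b^\theta$, using only axioms (A1)--(A3), is the crux of the argument; once that is done, finite axiomatisability of $R(\cap, ;)$ follows directly.
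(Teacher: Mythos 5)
Your axioms (semilattice laws, associativity of $\circ$, and two-sided monotonicity with respect to the induced order) are exactly the ones the paper uses, and your soundness remarks are fine. The problem is completeness: you correctly diagnose that the one-shot Zareckii-style condition $(x,y)\in a^\theta\iff\delta(y)\leq\delta(x)\circ a$ breaks the clause $a^\theta\cap b^\theta\subseteq(a\cap b)^\theta$, but you then leave the repair as ``the crux'' and the repair you gesture at --- enlarging the base with branching copies indexed by elements of $\mathcal{S}$ --- cannot work as described. The obstruction is local to each pair: whether $(x,y)$ lies in $a^\theta$, in $b^\theta$, or in $(a\cap b)^\theta$ is determined by the labels of $x$ and $y$ alone, so adding further points to the base changes none of these three facts and cannot remove an offending pair from $a^\theta\cap b^\theta$. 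What has to change is the membership condition itself, so that for every pair $(x,y)$ the set $\{a:(x,y)\in a^\theta\}$ is closed under $\cap$ (a filter), not merely an up-set; no single-element labelling $\delta$ achieves this, because $\delta(y)\leq\delta(x)\circ a$ and $\delta(y)\leq\delta(x)\circ b$ do not entail $\delta(y)\leq\delta(x)\circ(a\cap b)$ without distributivity of $\circ$ over $\cap$, which is not available.

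The paper's proof resolves exactly this point by a different device: a step-by-step representation game in which $\exists$ builds networks whose \emph{edges are labelled by upward-closed subsets} of $\mathcal{S}$ satisfying $N(x,y);N(y,z)\subseteq N(x,z)$. Composition witnesses are added one node at a time, the new edge labels being (up-closures of) products such as $N(w,x);a^\uparrow$ and $b^\uparrow;N(y,w)$; an induction using monotonicity shows every label so produced is closed under $\cap$, which is what makes intersection representable, while consistency of the new node follows from $a\circ b\in N(x,y)$, associativity and monotonicity. To rescue your one-shot construction you would have to build this filter property into the definition of $\theta$ from the outset, which essentially amounts to carrying out the game construction and passing to its limit. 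As it stands, the proposal establishes soundness and identifies the right axioms, but the completeness half is not proved.
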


\begin{proof}
A $(\cap, ;)$-representable $(\cdot, \circ)$-structure clearly satisfies the semilattice laws, associativity and monotonicity.  Conversely, in a representation game played over an associative, monotonic semilattice $\c S$,\/ $\exists$ plays a sequence of networks  --- graphs $N$ whose edges are labelled by upward closed subsets of $\c  S$, such that $N(x, y);N(y, z)\subseteq N(x, z)$ for all $x, y, z\in N$.  [See Definition 7.7 of  \cite{HH:book} for more details of a representation  game for the full  signature of relation algebra, and Chapter 9 for representation games in  a more general setting.] In the initial round suppose $\forall$ picks $a\neq b$.  By antisymmetry either $a\not\leq b$ or $b\not\leq a$, without loss assume the former. 
$\exists$ plays a network $N_0$ with two nodes  labelled $N(x, y)=a^\uparrow$, all other edges have empty labels, note that $b\not\in N(x, y)$. In a subsequent round let $N$ be the current network.  $\exists$ adds a single new node $z$ and lets $N'(w,z)=N(w,x);a^\uparrow,\; N'(z, w)=b^\uparrow;N(y, w)$ for all $w\in N$ to define $N'$.  Edges  within $N$ are not refined.    If $u, w\in N$ then $N'(u, z);N'(z, w)=N(u, x);a^\uparrow;b^\uparrow;N(y, w)\subseteq N(u, x);N(x, y);N(y,w)$, since $a;b\in N(x, y)$, using associativity, left and right monotonicity (see Figure~\ref{fig:meetCompInduction}). It is easily seen that $N'$ is a consistent network, a legal response to $\forall$'s move not refining the initial edge.   It follows that $\c S\in R(\cap, ;)$

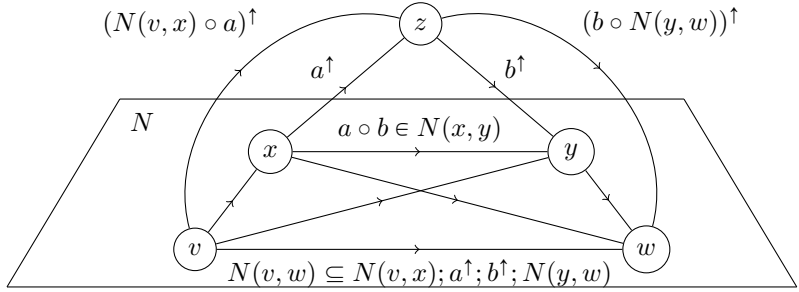
\begin{figure}
    \centering
    \begin{tikzpicture}
        \node[circle, draw] (x) at (0.5,1.8){$x$};
        \node[circle, draw] (y) at (4.5,1.8){$y$};
        \node[circle, draw] (z) at (2.5, 3.5){$z$};
        \node[circle, draw] (v) at (-0.5,0.5){$v$};
        \node[circle, draw] (w) at (5.5,0.5){$w$};
        
        \draw (-3,0) --  (7.5, 0);
        \draw (-3,0) --  (-1.5, 2.5);
        \draw (6,2.5) --  (-1.5, 2.5);
        \draw (6,2.5) --  (7.5, 0);
        \node at (-1.2,2.2) {$N$};
        
        \path (x) edge[->-] node[above]{$a\circ b \in N(x,y)$} (y);
        \path (x) edge[->-] node[above left]{$a^\uparrow$} (z);
        \path (z) edge[->-] node[above right]{$b^\uparrow$} (y);
        \path (v) edge[->-, bend left=60] node[above left, pos=0.6]{$(N(v,x)\circ a)^\uparrow$} (z);
        \path (z) edge[->-, bend left=60] node[above right, pos=0.4]{$(b \circ N(y,w))^\uparrow$} (w);
        \path(v) edge[->-]  (x);
        \path(v) edge[->-]  (y);
        \path(x) edge[->-]  (w);
        \path(y) edge[->-]  (w);
        \path(v) edge[->-] node[below]{$N(v,w) \subseteq N(v,x);a^\uparrow;b^\uparrow;N(y,w)$} (w);
    \end{tikzpicture}
    \caption{Node Addition in a Representation Game for $(\cdot, \circ)$}
    \label{fig:meetCompInduction}
\end{figure}
\end{proof}

 Let $\s$ be a $(\leq, -, \circ)$-structure.  A $(\subseteq, \setminus, ;)$-representation of $\s$  over base $X$ is a map $\theta:\s\rightarrow \wp(X\times X)$ such that for all $a, b\in\s$,
 \begin{itemize}
     \item $a\leq b \rightarrow a^\theta\subseteq b^\theta$,
     \item $(x, y)\in a^\theta\rightarrow (x, y)\in\Delta(b^\theta, (-b)^\theta)$ (the symmetric difference of $b^\theta$ and $(-b)^\theta$),
     \item $(x, y)\in (a\circ b)^\theta\leftrightarrow \exists z((x, z)\in a^\theta\wedge (z, y)\in b^\theta)$.
 \end{itemize}
 According to this definition, $-$ is represented as complementation in the union of all the represented binary relations.

\begin{proposition}\label{prop:neg}
For any $RA$ reduct $(\subseteq,\setminus,;) \subseteq \Sigma$, the  representation class  $R(\Sigma)$ fails to have the finite representation property for finite representable structures.
\end{proposition}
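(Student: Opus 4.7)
The plan is to leverage Neuzerling's undecidability result for finite-structure representability in the signature $(\subseteq, -, ;)$, where $-$ is interpreted as complement in $X\times X$, and to extend it to the present signature $(\subseteq, \setminus, ;)$, where $-$ is complement in an arbitrary maximal binary relation $M$. Once the representation problem is undecidable for finite structures in the present sense as well, failure of the finite representation property follows from the standard argument: finite representations and refutations of valid $\Sigma$-axioms are both recursively enumerable, so the finite representation property together with the recursive enumerability of the valid first-order theory of $R(\Sigma)$ would yield a decision procedure for representability, contradicting undecidability.

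For the undecidability extension, I would reduce Neuzerling's representation problem to the one here. Given a finite $(\subseteq, -, ;)$-structure $\mathcal{S}$, I would construct a finite $\Sigma$-structure $\mathcal{S}^+$ by adjoining a top element $\top$ with axioms $a \leq \top$, $\top\circ a = a\circ\top = \top$ for all $a\in\mathcal{S}$, $\top\circ\top = \top$, together with $-\top = 0$, where $0$ is the empty element available as $c\setminus c$ for any $c\in \mathcal{S}$. A Neuzerling-style representation of $\mathcal{S}$ extends to a representation of $\mathcal{S}^+$ in the present sense by setting $\top^\theta := X\times X$. Conversely, in any representation $\theta'$ of $\mathcal{S}^+$ with maximal relation $M'$, the axiom $-\top = 0$ forces $\top^{\theta'} = M'$, and after restricting the base to the points appearing in some relation the induced map on the $\mathcal{S}$-reduct should become a Neuzerling-style representation, because $M'$ then coincides with the Cartesian square of the restricted base.

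The main obstacle is ensuring that $M'$ really does equal $X'\times X'$ after restriction to the set of appearing points $X'$. Without further care the maximal relation could be strictly smaller, for example if some points appear only as left-endpoints and others only as right-endpoints of represented pairs. Handling this will require either strengthening the axioms on $\top$ (for instance adding $\top\circ a\circ\top = \top$ for nonzero $a$, to guarantee mutual reachability of appearing points) or performing a symmetrisation or quotient on the base that identifies points indistinguishable by the $\mathcal{S}$-reduct, so that the resulting representation continues to represent $\mathcal{S}$ faithfully while now having the full Cartesian square as its maximal relation. Expansions $\Sigma\supseteq(\subseteq,\setminus,;)$ cause no additional trouble, since any $\Sigma$-representation is, in particular, a $(\subseteq,\setminus,;)$-representation of the reduct.
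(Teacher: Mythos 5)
Your strategy is genuinely different from the paper's, and it has real gaps. The paper does not extend Neuzerling's undecidability result to the arbitrary-maximal-relation semantics; it sidesteps that entirely by exhibiting one concrete finite representable structure with no finite representation, namely the $(\leq,-,\circ)$-reduct of the point algebra $\mathcal{P}$, which is representable over $\mathbb{Q}$. The paper's argument is: (Claim 1) $g^\theta$ is irreflexive in any representation, using $g=g\circ e$, $g\leq -e$ and $e\leq -g$; (Claim 2) by induction there are $x$ and pairwise distinct $y_0,\ldots,y_n$ with $(x,y_i)\in(-g)^\theta$ and $(y_j,y_i)\in g^\theta$ for $i<j$, using $-g\leq(-g)\circ g$ and $g\circ g=g$; hence any base is infinite. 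This is short, constructive, and needs no appeal to recursive enumerability or undecidability.

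Your reduction, as stated, fails in the forward direction before the obstacle you identify is even reached. The axiom $\top\circ a=\top$ is imposed for all $a\in\mathcal{S}$, including $a=0$; but in any representation with $-\top=0$ the symmetric-difference condition forces $0^\theta\cap\top^\theta=\emptyset$ while $0\leq\top$ forces $0^\theta\subseteq\top^\theta$, so $0^\theta=\emptyset$ and $(\top\circ 0)^\theta=\top^\theta;0^\theta=\emptyset$, which contradicts $\top\circ 0=\top$ in any faithful representation. So $\mathcal{S}^+$ is never representable and the reduction collapses. Even after repairing this (restricting to nonzero $a$, or using $\top\circ a\circ\top=\top$), the converse direction --- that a representation of $\mathcal{S}^+$ with arbitrary maximal relation $M'$ can be restricted or quotiented to a Neuzerling-style representation whose maximal relation is the full Cartesian square --- is precisely the step you flag as the main obstacle and leave unresolved; none of the proposed fixes (strengthened axioms, symmetrisation, quotienting) is shown to preserve the composition condition and faithfulness. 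The surrounding logic (undecidability plus recursively enumerable validity plus the finite representation property would give a decision procedure) is sound and is the same argument the paper uses for the $(\cap,\cup,;)$ quadrant, but the load-bearing reduction is missing, so the proof is incomplete.
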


\begin{proof}
The point algebra $\c P$
 is a relation algebra whose boolean part has  three atoms $e, l, g$ (so 8 elements, $0, e, l, g, -e, -l, -g, 1$), where $e$ is the identity, the converse of $l$ is $g$, composition for atoms is given by
\[
\begin{array}{l|lll}
\circ &e&l&g\\
\hline
e&e&l&g\\
l&l&l&1\\
g&g&1&g
\end{array}
\]
and the operators extend to arbitrary elements by additivity.  A representation of $\c P$ over $\rats$ may be obtained by representing $e, l, g$ as the identity $\set{(q, q):q\in\rats}$, less than $\set{(q, q'): q<q'}$ and greater than, respectively.  It follows that the reduct of $\c P$ to $(\leq, -, \circ)$ is $(\subseteq, \setminus, ;)$-representable.  We claim it has no finite $(\subseteq, \setminus, ;)$-representation.  

Let $\theta$ be any $(\subseteq, \setminus, ;)$-representation of $\c P$ over the base $X$.\\
{\bf Claim 1:}  If $(x, y)\in g^\theta$ then $x\neq y$. 
To prove the claim, suppose for contradiction that there is a point $x \in X$ with $(x,x) \in g^\theta$. As $g \leq -e$, $(x,x) \in g^\theta\subseteq (-e)^\theta$. And since $g=g\circ e$, there exists a $y$ s.t. $(x,y) \in g^\theta, \;(y,x) \in e^\theta$. Since $e\circ g = g$, we also have $(y,x) \in g^\theta$. But $e \leq -g$, so $(y,x) \in (-g)^\theta$. Since $(y, x)\in g^{\theta}$ we  have reached a contradiction and proved claim 1.\\
{\bf Claim 2:} For $n\geq 0$ there is $x\in X$ and distinct points $y_0, \ldots, y_n\in X$ such that for all $i\leq n$ we have $(x, y_i)\in (-g)^\theta$ and for all $i<j\leq n$ we have $(y_j, y_i)\in g^\theta$.  See Figure~\ref{fig:(leq,-,;) induction}.
Claim 2 is proved by induction over $n$.  
For the base case, $n=0$, since $(-g)\not\leq 0$ there are $x, y_0$ where $(x, y_0)\in (-g)^\theta$.  Assume the hypothesis for some $n\geq 0$.    Since $(x, y_n)\in (-g)^\theta$ and $(-g)\leq 1= (-g)\circ g$, there must be $y_{n+1}\in X$ where $(x, y_{n+1})\in (-g)^\theta$ and $(y_{n+1}, y_n)\in g^\theta$.    Since $(y_n, y_i)\in g^\theta$ it follows that $(y_{n+1}, y_i)\in (g\circ g)^\theta = g^\theta$, for $i\leq n$.  By the previous claim, $y_{n+1}$ is distinct from $y_i$, for $i\leq n$, as required.  This proves  claim 2.

Since $X$ contains a set of $n$ distinct points, for all $n<\omega$, it follows that $X$ must be infinite.

\begin{figure}
    \centering
    \begin{tikzpicture}
   
    \node[circle, draw] (x) at(5,0) {$x$};
    \node[circle, draw] (y0) at(9,0) {$y_i$};
    \node[circle, draw] (yi) at(7,1) {$y_n$};
    \node[circle, draw] (yn) at(7,3) {$y_{n+1}$};

    \draw[->-] (x) -- node[above]{$-g$} (y0);
    \draw[->-] (x) -- node[above]{$-g$} (yi);
    \draw[->-] (x) -- node[above left]{$-g$} (yn);
    \draw[->-] (yn) -- node[right]{$g$} (yi);
    \draw[->-] (yn) -- node[above right]{$g$} (y0);
    \draw[->-] (yi) -- node[above]{$g$} (y0);
    \draw[dashed] (8.5,-0.5) rectangle (10.5,1);
    \node at (9.7,0.75){$0\leq i<n$};
    
  \end{tikzpicture} 
  
    \caption{Induction showing a new node is needed for representation of $P$}
    \label{fig:(leq,-,;) induction}
\end{figure}
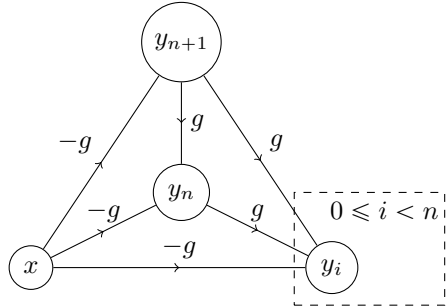
\end{proof}

\section{Demonic Lattice and Semilattice}

We have seen in the introduction that demonic join $\sqcup$ is the meet operation for demonic refinement $\sqsubseteq$.   A demonic meet $\sqcap$, acting as a least upper bound of its two arguments, may not in general be defined, as there are binary relations having no common lower bound at all.  If a point $x$ is in the domain of two binary relations $R, S$, but  not in the domain of $R\cap S$, then any lower bound of $R, S$ would be below the intersection $R\cap S$, hence $x$ would be outside its domain, yet in order to be a lower bound its domain should contain both the domain of $R$ and the domain of $S$, a contradiction.  This problem could solved be adding a single new point $\infty$ to the base $X$ of the representation $\theta$ and letting $\theta'(R)=\theta(R)\cup\set{(x, \infty): x\in \dom{R}}$  to obtain an alternative representation of the refinement algebra, with $\sqsubseteq$-least element $\set{(x, \infty): x\in X\cup\set\infty}$.  Over such a representation, a greatest lower bound may be defined by
\begin{align*}
    R\sqcap S = & \set{(x, y):(x, y)\in R,\; x\not\in \dom{S}}\\
    & \cup (R\cap S) \\
    & \cup \set{(x, y):(x, y)\in S,\;x\not\in \dom{R}}
\end{align*}
Hence,  every representable $(\sqsubseteq,;)$-structure embeds into a representable $(\sqcap,$ $\sqcup, ;)$-structure forming a distributive lattice with composition.  We expect that additional properties are required  to ensure that such a representation exists.  

\begin{problem}
Is the class of representable semigroups with demonic semilattice $R(\sqcup, ;)$ finitely axiomatisable and are the finite structures in $R(\sqcup, ;)$ representable over  finite bases?
\end{problem}

\begin{problem}
Find axioms for the class of all $(\sqcup, \sqcap, ;)$-structures of binary relations with demonic join and meet under composition.
\end{problem}

\bibliography{ref}
\bibliographystyle{alpha}

\end{document}